      \theoremstyle{plain}
      \newtheorem{assumption}{Assumption}
\newtheorem{theorem}{Theorem}
\newtheorem{definition}{Definition}
\newtheorem{problem}{Problem}
\newtheorem{lemma}{Lemma}
\newtheorem{remark}{Remark}
\newtheorem{condition}{Condition}
\newtheorem{corollary}{Corollary}
\begin{document}

\title{A Unified Framework for Online Data-Driven Predictive Control with Robust Safety Guarantees}

\author{Amin~Vahidi-Moghaddam, Kaian~Chen, Kaixiang~Zhang, Zhaojian~Li$^*$, Yan~Wang, and Kai~Wu
\thanks{This work was supported by the Ford Motor Company under award number 000315-MSU0173.}
\thanks{$^*$Zhaojian Li is the corresponding author.}
\thanks{Amin Vahidi-Moghaddam, Kaian Chen, Kaixiang Zhang, and Zhaojian Li are with the Department of Mechanical Engineering, Michigan State University, East Lansing, MI 48824 USA (e-mail: vahidimo@msu.edu, chenkaia@msu.edu, zhangk64@msu.edu, lizhaoj1@egr.msu.edu).}
\thanks{Yan Wang and Kai Wu are with the Research and Advanced Engineering, Ford Motor Company, Dearborn, MI 48121 USA (e-mail: ywang21@ford.com, kwu41@ford.com).}
} 

\maketitle

\begin{abstract}
Despite great successes, model predictive control (MPC) relies on an accurate dynamical model and requires high onboard computational power, impeding its wider adoption in engineering systems, especially for nonlinear real-time systems with limited computation power. These shortcomings of MPC motivate this work to make such a control framework more practically viable for real-world applications. Specifically, to remove the required accurate dynamical model and reduce the computational cost for nonlinear MPC (NMPC), this paper develops a unified online data-driven predictive control pipeline to efficiently control a system with guaranteed safety without incurring large computational complexity. The new aspect of this idea is learning not only the real system but also the control policy, which results in a reasonable computational cost for the data-driven predictive controllers.
More specifically, we first develop a spatial temporal filter (STF)-based concurrent learning scheme to systematically identify system dynamics for general nonlinear systems. We then develop a robust control barrier function (RCBF) for safety guarantees in the presence of model uncertainties and learn the RCBF-based NMPC policy. Furthermore, to mitigate the performance degradation due to the existing model uncertainties, we propose an online policy correction scheme through perturbation analysis and design of an ancillary feedback controller. Finally, extensive simulations on two applications, cart-inverted pendulum and automotive powertrain control, are performed to demonstrate the efficacy of the proposed framework, which shows comparable performance with much lower computational cost in comparison with several benchmark algorithms.
\end{abstract}

\begin{IEEEkeywords}
 Nonlinear Model Predictive Control; Robust Control Barrier Function; Spatial Temporal Filters; Concurrent Learning; Nonlinear System Identification.
\end{IEEEkeywords}

\IEEEpeerreviewmaketitle

\section{Introduction}
\label{Sec1}
Attaining peak performance while guaranteeing safety is a common goal in many engineering systems. Model predictive control (MPC) arises as a promising framework to accomplish this task by solving a constrained optimization problem with future state predictions \cite{zidek2021model,alonso2021robust,lopez2019dynamic,wytock2017dynamic}. However, MPC relies on an accurate dynamical model, which is challenging to obtain, especially for nonlinear and complex systems \cite{vahidi2023extended,vahidi2022event,yazdandoost2022optimization,hajidavalloo2021mpc}. Towards that end, in our prior work \cite{chen2020online}, we developed an online nonlinear system identification that exploits spatial-temporal filters (STF) to systematically decompose a nonlinear system into local composite models using input-output data. Compared to black-box models, using neural network (NN) \cite{akbari2023blending,ahmadi2022deep,akbari2022numerical,jamali2021objective,jebellat2021training} and Gaussian process regression (GPR) \cite{yousefpour2023unsupervised,foumani2023multi}, the STF has simpler form with greater interpretability, and the resultant MPC has a reasonable performance \cite{chen2022stochastic}. Despite promising empirical results, the STF needs to satisfy the persistence of excitation (PE) conditions, which is very important in learning and will be treated in this paper.  

In addition, the typical MPC implementation involves solving an online optimization problem at each step and thus requires high onboard computational capabilities that are not generally available in many embedded systems. To tackle this issue, one prevalent approach involves the utilization of model-reduction techniques to simplify the system dynamics \cite{zhang2022dimension,amiri2022fly,zhong2022optimally,lore2021model}. However, employing such techniques necessitates a trade-off between system performance and computational complexity, and even after model reduction, the computational burden often remains substantial. Another sound approach involves the implementation of function approximators, such as the NNs \cite{bao2022learning,krishnamoorthy2021adaptive} and the GPR \cite{arcari2023bayesian,arcari2020meta}, to effectively learn the MPC policy. However, the trained controllers suffer from several shortcomings: i) Unlike the MPC that ensures system safety, the trained controllers offer no assurance on safety, and ii) The trained controllers inevitably cause performance degradation due to control learning error, system identification error, and unknown disturbance. These issues have significantly hindered the adoption of such controllers in real-world engineering applications.
 
As such, several lines of work has been conducted to address the aforementioned challenges. For example, to guarantee system safety, control barrier function (CBF) has recently emerged as a promising framework to efficiently handle constraints \cite{dean2020guaranteeing,chen2020safety,ames2019control}. The CBF implies forward invariance of a safety set based on system dynamics \cite{isaly2020zeroing,shivam2020intersection}. Moreover, robust CBF (RCBF) and adaptive CBF (ACBF) approaches have been proposed to maintain system safety in the presence of system uncertainties such as unknown disturbance, model mismatch, and state estimation error \cite{garg2021robust,black2021fixed,isaly2021adaptive,lopez2020robust,taylor2020adaptive}. On the other hand, to minimize the performance loss caused by the controller learning error, an online adaptive control policy has been proposed for the MPC policy learning \cite{krishnamoorthy2021adaptive}. However, a holistic treatment of the control learning error, the system identification error, and the unknown disturbance -- necessary for real-world engineering systems -- has not been developed.

In this paper, we develop a unified data-driven safe predictive control framework that efficiently regulates system states with guaranteed safety for general nonlinear systems. Specifically, we first develop a discrete-time STF-based concurrent learning for efficient online nonlinear system identification. Compared to noise-injection PE satisfaction schemes, the concurrent learning uses a memory of collected data to propose a rank condition instead of the PE condition, which makes it easy to monitor online and removes the requirement of noise signal \cite{vahidi2021learning,vahidi2020memory,chowdhary2010concurrent}. An extended RCBF scheme is further developed to guarantee system safety by systematically considering all model uncertainties. A nonlinear MPC (NMPC) is then developed based on the composite local model structure and the RCBF constraint, and another STF function approximator is exploited to learn the NMPC policy. Finally, a policy correction scheme is proposed for efficient online implementation.

The novelty and contributions of this paper include the following. First, the proposed discrete-time STF-based concurrent learning technique handles both structured and unstructured uncertainties as well as unknown external disturbance without requiring derivatives of system states or filter regressors to remove the PE condition compared to \cite{chowdhary2010concurrent,vahidi2020memory,vahidi2021learning}. Second, the developed RCBF guarantees system safety in the presence of not only the system identification error and the external disturbance but also the control learning error. Different from \cite{garg2021robust,black2021fixed,isaly2021adaptive,lopez2020robust,taylor2020adaptive}, this framework provides a holistic treatment to guarantee system safety in the presence of control learning error, system identification error, and unknown disturbance -- necessary for the real-world engineering systems. Third, an online adaptive control policy, including a KKT adaptation and a feedback control, is proposed to mitigate the performance loss due to the existing approximation errors such that it keeps the real trajectory around the ideal nominal trajectory. Last but not least, the efficacy of the proposed control synthesis is demonstrated in applications of cart-inverted pendulum and automotive powertrain control.

The remainder of this paper is organized as follows. The problem formulation and  preliminaries on the MPC and the CBF are provided in Section~II. Section~III presents the proposed online STF-based data-driven safe predictive control. Section~IV presents simulation results on the cart inverted pendulum and the automotive powertrain control. Finally, conclusions are drawn in Section~V.

\textbf{Notations}. Throughout the paper, the following notations are adopted. $\mathbb{R}^n$ and $\mathbb{R}^{n\times m}$ denote the set of $n$-dimensional real vectors and the set of $n\times m$-dimensional real matrices, respectively. $A>0$ denotes that the matrix $A$ is positive definite. $x^T$ and $A^T$ denote the transpose of the vector $x$ and the matrix $A$, respectively. $\lVert . \rVert$ represents the Euclidean norm of a vector or the induced $2$-norm of a matrix. $\lambda_{min} (A)$ and $\lambda_{max} (A)$ denote the minimum eigenvalue and the maximum eigenvalue of $A$, respectively. The identity matrix with $n$-dimension is denoted by $I_n$. Moreover, Table I represents the used abbreviations in this paper.

\begin{table}[!ht]
\centering
 \caption{Abbreviations}
\begin{tabular}{ |p{2cm}|p{6cm}| }
\hline\hline
Abbreviations & Meaning \\
\hline
MPC & Model Predictive Control \\\hline
NMPC & Nonlinear Model Predictive Control \\\hline
STF & Spatial-Temporal Filter \\\hline
NN & Neural Network \\\hline
GPR & Gaussian Process Regression \\\hline
PE & Persistence of Excitation \\\hline
CBF & Control Barrier Function \\\hline
RCBF & Robust Control Barrier Function \\\hline
ACBF & Adaptive Control Barrier Function \\\hline
CL & Concurrent Learning \\\hline
RLS & Recursive Least Squares \\\hline
UUB & Uniformly Ultimately Bounded \\\hline
QP & Quadratic Programming \\\hline
\hline
\end{tabular}
\end{table}

\section{Preliminaries and Problem Formulation}
\label{Sec2}
In this section, preliminaries on model predictive control (MPC) and control barrier function (CBF) are first reviewed, and problem of online data-driven predictive control with safety guarantees is then formulated for nonlinear discrete-time systems.

\subsection{Model Predictive Control}
Consider a class of nonlinear discrete-time system that has the following form: 
\begin{equation}
    \label{system}
    x(k+1) = f(x(k),u(k))+w(k),\quad y(k) = g(x(k)),\\
\end{equation}
where $k \in \mathbb{N}^+$ denotes the time step, $x \in \mathbb{R}^n$ is the state vector, $u \in {\mathbb{R}^m}$ represents the control input, $w \in \mathbb{R}^n$ is an unknown external disturbance, and $y \in \mathbb{R}^l$ denotes the output of the system. Moreover, $f:\mathbb{R}^n\times \mathbb{R}^m \rightarrow \mathbb{R}^n$ is the system dynamics with $f(0,0)=0$, and $g:\mathbb{R}^n \rightarrow \mathbb{R}^l$ represents the output dynamics. We assume  the system states are measurable.

The system state and controls are subject to the following constraints:
\begin{equation}
    \label{Constraints}
    u(k) \in \mathbb{U} \subset \mathbb{R}^m,
     \quad x(k) \in \mathbb{X} \subset \mathbb{R}^n,
\end{equation}
where the input constraints are generic subsets of $R^m$, and the state constraints can be represented as a set of nonlinear inequalities, including box, ellipsoidal, or polytopic constraints.


\begin{definition}[Closed-Loop Performance]
\label{def1}
Consider the nonlinear system \eqref{system} and a control problem of tracking a desired time-varying reference $r$ by the output $y$. Starting from an initial state $x_0$, the closed-loop system performance over $N$ steps is characterized by the following cost term:
\begin{equation}
    \label{Cost}
    \begin{aligned}
  &J_N(\bf{x},\bf{u}) \\&= \sum^{N-1}_{k=0} \phi(x(k),u(k),y(k),r(k)) + \psi(x(N),y(N),r(N)),
  \end{aligned}
\end{equation}
where $\mathbf{u} = \left[ u(0),\, u(1),\, \cdots,\, u(N-1)  \right]$, $\mathbf{x} = \left[ x(0),\, x(1),\, \cdots,\, x(N)  \right]$, and $\phi(x,u,y,r)$ and $\psi(x,y,r)$ are respectively the stage cost and the terminal cost that take the following forms:
\begin{equation}
  \begin{aligned}
    \label{Cost format}
    & \phi(x,u,y,r) = x^{T} Q x + u^{T} R u + (y-r)^{T} P (y-r), \\
    & \psi(x,y,r) = x^{T} Q_N x + (y-r)^{T} P_N (y-r),
  \end{aligned}
\end{equation}
where $Q$, $R$, $P$, $Q_N$, and $P_N$ are positive-definite matrices of appropriate dimensions.
\end{definition}

With the defined cost, the control goal is now to minimize   the cost function \eqref{Cost} while adhering to the constraints in \eqref{system} and \eqref{Constraints}. 
In practice, the real nonlinear system \eqref{system} may not be available; thus, system identification algorithms are typically used to identify the system model. We denote the identified (nominal) model as
\begin{equation}
    \label{nominal model}
     \hat{x}(k+1) = \hat{f}(\hat{x}(k), u(k)),\quad
     \hat{y}(k) = g(\hat{x}(k)),
\end{equation}
where $\hat{x}$, $\hat{y}$, and $\hat{f}$ denote the  states,  outputs, and dynamics of the identified model, respectively.

The MPC aims at optimizing the system performance over $N$ future steps using the nominal model \eqref{nominal model}, and at each time step $k$, it is reduced to the following constrained optimization problem:
\begin{equation}
  \begin{aligned}
    \label{NMPC}
    &\qquad\qquad(\mathbf{x}^{*},\mathbf{u}^{*}) = \underset{\mathbf{\hat{x}},\mathbf{u}}{\arg\min} \hspace{1 mm} J_N(\mathbf{\hat{x}},\mathbf{u})\\
    &s.t.\quad \hat{x}(k+1) = \hat{f}(\hat{x}(k),u(k)),\quad \hat{y}(k) = g(\hat{x}(k))\\
    & \hspace{8.5 mm} \hat{x}(0) = x(k),\quad u(k) \in U, \quad \hat{x}(k) \in X,
  \end{aligned}
\end{equation}
In a typical MPC implementation, only the first optimal control $u^*(0)$ is executed, the system evolves one step, and the process is then repeated.
The MPC has enjoyed great success in numerous applications; however, there are still a few challenges that remain, including high computational cost (especially for nonlinear systems) and robustness to the model uncertainties, which will be treated in Section~III.

\subsection{Control Barrier Function}
Consider a closed set $S$ that is defined as the sublevel set of a continuously differentiable function $h: \mathbb{X} \subset \mathbb{R}^n \rightarrow \mathbb{R}$:
\begin{equation}
\begin{aligned}
    \label{Safe set}
  & S = \{x \in \mathbb{R}^n: h(x) \leq 0\},\\
  & \partial S = \{x \in \mathbb{R}^n: h(x) = 0\},\\
  & Int(S) = \{x \in \mathbb{R}^n: h(x) < 0\}.
  \end{aligned}
\end{equation}
The set $S$ can be used to define the safe set of a system and is forward invariant if the function $h(x)$ is a CBF, i.e.,
\begin{equation}
\begin{aligned}
    \label{CBF condition}
  \triangle h(x(k)) \leq -\gamma h(x(k)),\quad 0 < \gamma \leq 1,
  \end{aligned}
\end{equation}
where $\triangle h(x(k)):=h(x(k+1))-h(x(k))$, and $\gamma$ is a design parameter. Using the CBF constraint \eqref{CBF condition}, it can be shown that the upper bound of the CBF $h(x)$ decreases exponentially with a rate of $1-\gamma$ as
\begin{equation}
\begin{aligned}
    \label{CBF condition2}
  h(x(k+1)) \leq (1-\gamma) h(x(k)).
  \end{aligned}
\end{equation}
The CBF is widely used in engineering systems to avoid unsafe regions and guarantee system safety; however, robustness to the model uncertainties must be addressed, which will be treated in Section III-B.

\subsection{Problem Statement}
The MPC scheme in \eqref{NMPC} poses two major challenges. First, the nonlinear constrained optimization problem is solved at each time step, incurring heavy computations that are often too computationally expensive for systems with limited onboard computations and real-time constraints. Second, the control performance highly relies on the accuracy of the identified model \eqref{nominal model}. As such, the goal of this paper is to develop a unified data-driven predictive control framework to address the challenges. Specifically, to overcome the first issue, we aim to learn the nonlinear MPC (NMPC) policy using a function approximator such that we have $\tilde{u}\approx\pi_{\text{MPC}}$, where $\tilde{u}$ and $\pi_{\text{MPC}}$ represent the NMPC policy approximation and the NMPC policy, respectively. To address the second issue, we aim at developing an online adaptation scheme to robustly handle the model uncertainties. More specifically, from \eqref{system}, \eqref{nominal model}, and $\tilde{u}$, the following drift dynamics errors are defined:
\begin{equation}
  \begin{aligned}
    \label{identification error}
    &e_{s}(k) = \hat{f}(\hat{x}(k),u(k)) - f(x(k),u(k)),\\ &e_{c}(k) = \hat{f}(\hat{x}(k),\tilde{u}(k)) - \hat{f}(\hat{x}(k),u(k)),
    \end{aligned}
\end{equation}
where $e_{s}$ and $e_{c}$ represent the drift dynamics errors due to the imperfections on the system identification and the NMPC policy learning, respectively. As a result, one can express the real system \eqref{system} in the following form:
\begin{equation}
  \begin{aligned}
    \label{disturbed system}
    x(k+1) = \hat{f}(\hat{x}(k),\tilde{u}(k)) - e_{c}(k) - e_{s}(k) + w(k).
  \end{aligned}
\end{equation}

Note that due to the system identification error $e_{s}$ and the unknown disturbance $w$, the NMPC \eqref{NMPC} may not optimize the closed-loop performance \eqref{Cost} for the real system \eqref{system}. Moreover, the control policy learning error $e_{c}$ further exacerbates the complexity to achieve a satisfactory performance for the real system. Therefore, the function-approximated control policy $\tilde{u}$ needs to  adapt online so that the performance loss, caused by $e_{c}$, $e_{s}$, and $w$, is mitigated. The objective is thus to develop an online data-driven safe predictive control such that it efficiently identifies the real system and subsequently learns a discrete-time RCBF-based NMPC policy offline, and then adapts the NMPC policy approximation online to optimize the closed-loop performance for the real system with the existing model uncertainties. This objective is stated as follows.

\begin{problem} [Online Data-Driven Safe Predictive Control]
\label{prob1}
Consider the nonlinear system \eqref{disturbed system} with the safety constraint \eqref{CBF condition2}. Design an online data-driven safe predictive control to achieve the following properties:

\hspace{-4.75 mm} i) For the offline part, $e_{s}$ and $e_{c}$ converge to zero if $w=0$ and to a bounded region around zero if $w \neq 0$.

\hspace{-5 mm} ii) For the online part, the safety constraint \eqref{CBF condition2} is guaranteed for the real system.

\hspace{-5.25 mm} iii) For the online part, the appeared performance loss due to $e_{c}$, $e_{s}$, and $w$ is mitigated.
\end{problem}


\begin{assumption} [Bounded Terms]
\label{Bounded Terms}
1) $f(x,u)$ is bounded for bounded inputs, 2) There exists $\varepsilon_w> 0$ such that $\|w(k)\| \leq \varepsilon_w$ for all time steps $k$, and 3) $\eta$ is the Lipschitz constant of the discrete CBF $h(x)$ such that
\begin{equation}
  \begin{aligned}
    \label{DT Lipschitz constant}
  \left| h(x)-h(\hat{x}) \right| \leq \eta \left\| x-\hat{x} \right\|.  
  \end{aligned}
\end{equation}
\end{assumption}

\begin{definition} [Persistently Exciting \cite{bai1985persistency}]
\label{def pe}
The bounded vector signal $Z(\cdot) \in \mathbb{R}^n$ is persistently exciting (PE), in $\bar{k}$ steps if there exists $\bar{k}\in \mathbb{Z}^+$ and $\beta>0$ such that 
\begin{equation}
  \begin{aligned}
    \label{PE}
    &\sum^{\sigma + \bar{k}}_{k=\sigma} Z(k) Z^{T}(k) \geq \beta I,
  \end{aligned}
\end{equation}
where $\sigma \geq k_0$ with $k_0 \in \mathbb{N}$ being the initial time step.
\end{definition}

\begin{figure}[!ht]
     \centering
     \includegraphics[width=0.82\linewidth]{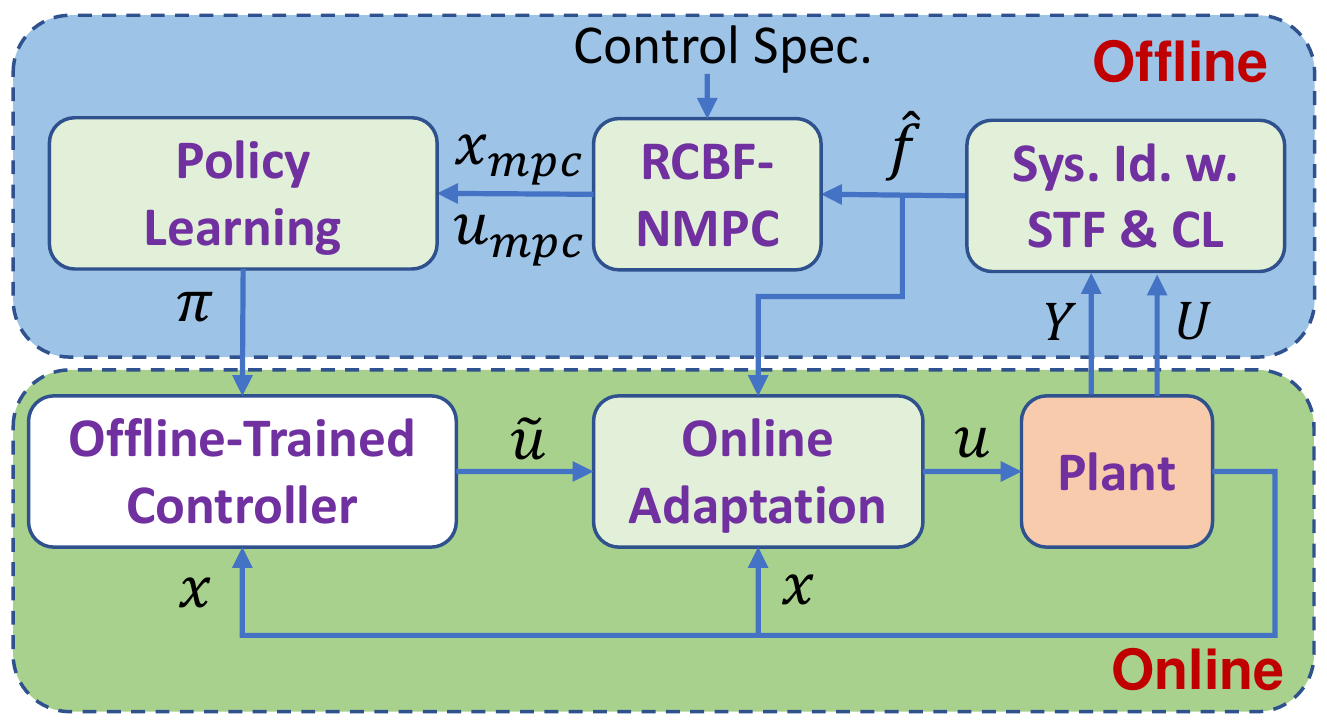}
     \caption{Schematic of online data-driven safe predictive control.}
     \label{block diagram}
\end{figure}

\section{Main Results}
\label{Sec3}
The developed data-driven safe predictive control pipeline is shown in Fig.~\ref{block diagram}. First, the discrete-time concurrent learning (CL) technique is integrated with the spatial temporal filter (STF)  \cite{chen2020online} for efficient nonlinear system identification with relaxed PE conditions. With the identified model and control specifications, the RCBF-based NMPC is exploited to optimize the system performance such that the safety is guaranteed despite the control policy learning error, the system identification error, and the unknown disturbances. A function approximator is then trained to approximate the RCBF-based NMPC policy for efficient online implementation, and an adaptation scheme is developed to minimize the performance loss due to the learning errors.
The main components of the framework are detailed in the following subsections.

\subsection{Nonlinear System Identification with STF-based Concurrent Learning}
In this subsection, we present a unified nonlinear system identification to obtain \eqref{nominal model} and enable the nominal NMPC design. Specifically, we identify a nonlinear autoregressive exogenous model (NARX) using the input-output data as follows
\begin{equation}
  \begin{aligned}
    \label{NARX Model}
    y(k)&=G(U_{d}(k),Y_{d}(k)), \\ 
    U_{d}(k)&=[u^{T}(k-1),u^{T}(k-2),\ldots ,u^{T}(k-d_{u})], \\ 
    Y_{d}(k)&=[y^{T}(k-1),y^{T}(k-2),\ldots ,y^{T}(k-d_{y})].
  \end{aligned}
\end{equation}
Here $d_{u}$ and $d_{y}$ are the input delay and the output delay, respectively, and $G$ is a nonlinear output prediction function. 

Towards that end, we follow our prior work on STF-based system identification that exploits evolving clustering and recursive least squares to systematically decompose the nonlinear system into multiple local models and simultaneously identify the validity zone of each local model   \cite{chen2020online}. Specifically, let $U_{stf}(k) = [U_{d}(k),Y_{d}(k)]^{T}$
, the nonlinear model \eqref{NARX Model} is expressed as a composite local model structure, where each local model has a certain valid operating regime, in the following form:
\begin{equation}
  \begin{aligned}
    \label{Composite Structure}
  y(k)&=F(U_{stf}(k),\omega(k);\Phi ,\Psi) \\ 
  &=\sum\limits_{i=1}^{L} {\alpha }_{i}({U}_{stf}(k);\phi_i,{\psi }_{i}) {f}_{i}({U}_{stf}(k),\omega_i(k);{\phi }_{i}), 
  \end{aligned}
\end{equation}
where ${f}_{i}({U}_{stf}(k),\omega_i(k);{\phi }_{i})$ is the $i$th local model that can take the form of a point, linear model, Markov chain, neural network etc, parameterized by ${\phi }_{i}$ in the presence of unstructured uncertainty $\omega_i(k)$. 
${\alpha }_{i}({U}_{stf}(k);{\psi }_{i})$ is the weighting functions that interpolates the local models and is parameterized by ${\psi }_{i}$. It is based on a dissimilarity metric that combines clustering \cite{boroujeni2023hybrid,boroujeni2021data,derakhshan2021detecting} and local model prediction error measures. $L$ is the number of local models, and  $\Phi=[{\phi }_{1},{\phi }_{2},\ldots,{\phi }_{L}]$, $\Psi=[{\psi }_{1},{\psi }_{2},\ldots,{\psi }_{L}]$, and $\omega(k)=[w_{1}(k),w_{2}(k),\ldots,w_{L}(k)]$ are the collection of local model parameters, local interpolating function parameters, and unknown local unmodeled disturbance $w(k)$, respectively. 

In particular, we consider linear local models and a softmax-like interpolation function: 
\begin{equation}
  \begin{aligned}
    \label{local models}
  &{f}_{i}({U}_{stf}(k),\omega_i(k);{\phi }_{i}) = {A}_{i}{U}_{stf}(k)+{b}_{i}+\omega_i(k), \\ 
  &{\alpha }_{i}({U}_{stf}(k);{\psi }_{i},\phi_i) = 
  \frac{\exp(-D_i({U}_{stf});{\psi }_{i},\phi_i ))}{\sum \limits_{j=1}^{L}{\exp (-D_j({U}_{stf});{\psi }_{i},\phi_i))}}.
  \end{aligned}
\end{equation}
Here  $A_i \in \mathbb{R}^{n_y \times n_U}$ and $B_i \in \mathbb{R}^{n_y}$ are local model parameters, and $\omega_i(k) \in \mathbb{R}^{n_y}$ is the  unstructured uncertainty for local model $i$. Moreover, ${\alpha }_{i}({U}_{stf}(k);{\psi }_{i},\phi_i)$ resembles the softmax-like function, and $D_i({U}_{stf});{\psi }_{i},\phi_i )$ is the dissimilarity metric that combines a Mahalanobis distance and the model residual \cite{chen2020online}.
Essentially, the STF uses evolving clusters of ellipsoidal shape as function bases for local model interpolation. Each local model ${f}_{i}$ is associated with an evolving cluster such that the clusters and the local model parameters are updated simultaneously. The readers are referred to \cite{chen2020online} for more details about the STF.

The standard STF \cite{chen2020online} uses the recursive least squares (RLS) to update the local model parameters $\{A_i,b_i\}$, which requires a noise signal added to the system control input so that the PE condition is satisfied for learning error convergence\footnote{A video of the simulation result on the STF-Idnetifier can be found online at https://www.youtube.com/watch?v=UYZiNC1LJwM\&t=12s.}. However, this will inevitably increase the complexity and may not be applicable in systems where the control inputs are not completely programmable. Therefore, in this subsection, we propose a discrete-time concurrent learning technique for the STF to relax the PE condition with a rank condition that is convenient for inspection and implementations.  

Specifically, using the local linear models \eqref{local models}, the nonlinear model \eqref{Composite Structure} is rewritten as
\begin{equation}
  \begin{aligned}
    \label{Composite Structure b}
  &y(k)=\sum\limits_{i=1}^{L} {\alpha }_{i}({U}_{stf}(k);{\psi }_{i}) ({A}_{i}{U}_{stf}(k)+{b}_{i}+\omega_i(k)),
  \end{aligned}
\end{equation}
which can be further written in the following compact form:
\begin{equation}
  \begin{aligned}
    \label{Regressor Format}
  &y(k)=\Phi \zeta({U}_{stf}(k),\alpha (k))+\omega(k), 
  \end{aligned}
\end{equation}
where $ \Phi = \left[A_1, b_1,\ldots, A_L, b_L  \right]\in \mathbb{R}^{n_y \times q}$, $\zeta({U}_{stf},\alpha) = \left[\alpha_1 {U}^{T}_{stf},\, \alpha_1, \ldots, \alpha_L {U}^{T}_{stf},\, \alpha_L \right]^T\in \mathbb{R}^{q}$, and $q=L(n_U+1)$. Moreover, the total unknown disturbance is given as $\omega=\alpha_1 \omega_1+ \ldots+ \alpha_L \omega_L \in \mathbb{R}^{n_y}$, where $\|\omega(k)\| \leq \varepsilon$ and $\varepsilon=\alpha_1 \varepsilon_1+ \ldots+ \alpha_L \varepsilon_L$ with $\varepsilon_i$ being the bound of $\omega_i$, i.e.,  $\|\omega_i(k)\| \leq \varepsilon_i$.

Therefore, the identified model can be represented as
\begin{equation}
  \begin{aligned}
    \label{Regressor Identified Model}
  &\hat{y}(k)=\hat{\Phi}(k) \zeta({U}_{stf}(k),\alpha (k)), 
  \end{aligned}
\end{equation}
where $\hat{\Phi} = \left[\hat{A}_1, \hat{b}_1, \ldots, \hat{A}_L, \hat{b}_L  \right]$. Thus, at each time step $k$, the system identification error is defined as
\begin{equation}
  \begin{aligned}
    \label{System Identification Error}
  &{e}_{si}(k) = \hat{y}(k) - y(k)\\
  & \hspace{8.75 mm}= \tilde{\Phi}(k) \zeta({U}_{stf}(k),\alpha (k))-\omega(k),
  \end{aligned}
\end{equation}
where $\tilde{\Phi}(k) = \hat{\Phi}(k) - \Phi$ describes the parameter identification errors. Note that the system identification error ${e}_{si}$ is measurable since the output $y$ is available for measurement.

The concurrent learning technique records past data and collects them in a history stack as
\begin{equation}
  \begin{aligned}
    \label{Recorded Data}
  &Z = \left[\zeta({U}_{stf}(k_1),\alpha (k_1)), \ldots, \zeta({U}_{stf}(k_s),\alpha (k_s)) \right],
  \end{aligned}
\end{equation}
where $k_1, \ldots, k_s$ are the past historic recording time steps, and $s$ denotes the number of recorded data in the history stack. At the current time step $k$, the system identification error for the $j$th recorded sample $\zeta({U}_{stf}(k_j),\alpha (k_j))$ is given as
\begin{equation}
  \begin{aligned}
    \label{System Identification Error j}
  &{e}_{si}(k_j) = \tilde{\Phi}(k) \zeta({U}_{stf}(k_j),\alpha (k_j))-\omega(k_j), \hspace{1 mm} j = 1, 2, \ldots, s,  
  \end{aligned}
\end{equation}
where $\tilde{\Phi}(k)$ is the parameter identification error at the current time step. Then, defining the normalizing signal $m(k)=\sqrt[]{\varrho+\zeta(k)^T \zeta(k)}, \varrho>0$, one has
\begin{equation}
  \begin{aligned}
    \label{Normalized System Identification Error}
  \bar{e}_{si}(k) &= \tilde{\Phi}(k) \bar{\zeta}({U}_{stf}(k),\alpha (k))-\bar{\omega}(k),\\
  \bar{e}_{si}(k_j) &= \tilde{\Phi}(k) \bar{\zeta}({U}_{stf}(k_j),\alpha (k_j))-\bar{\omega}(k_j),
  \end{aligned}
\end{equation}
where $\bar{\zeta}(k)=\frac{\zeta(k)}{m(k)}$, $\bar{\omega}(k)=\frac{\omega(k)}{m(k)}$, $\bar{\zeta}(k_j)=\frac{\zeta(k_j)}{m(k_j)}$, and $\bar{\omega}(k_j)=\frac{\omega(k_j)}{m(k_j)}$.

\begin{condition} [Rank Condition]
\label{Rank Condition}
The number of linearly independent elements in the history stack $Z$ \eqref{Recorded Data} is the same as the dimension of $\zeta({U}_{stf},\alpha)$; i.e., $rank(Z)=q$.
\end{condition}

Under Condition \ref{Rank Condition}, it is obvious that the following inequalities hold:
\begin{equation}
  \begin{aligned}
    \label{Positive Matrix}
  H_1=\sum\limits_{j=1}^{s} \bar{\zeta}({U}_{stf}(k_j),\alpha (k_j)) \bar{\zeta}^T({U}_{stf}(k_j),\alpha (k_j)) > 0, 
  \end{aligned}
\end{equation}
\begin{equation}
  \begin{aligned}
    \label{Positive Matrix 2}
  H_2=\bar{\zeta}({U}_{stf}(k),\alpha (k)) \bar{\zeta}^T({U}_{stf}(k),\alpha (k)) + H_1 >0. 
  \end{aligned}
\end{equation}

Now, a discrete-time STF-based concurrent learning law is presented in the following theorem.

\begin{theorem} [Discrete-Time STF-based Concurrent Learning]
\label{Discrete-Time Concurrent Learning}
Suppose Condition \ref{Rank Condition} is satisfied. Consider the nonlinear model \eqref{Regressor Format} and the identified model \eqref{Regressor Identified Model}. Then, the discrete-time concurrent learning law
\begin{equation}
  \begin{aligned}
    \label{Update Law}
  &\hat{\Phi}(k+1)=\hat{\Phi}(k) - \bar{e}_{si}(k) \bar{\zeta}^T({U}_{stf}(k),\alpha (k)) \hspace{0.5 mm} \Omega\\
  & \hspace{16 mm} - \sum\limits_{j=1}^{s} \bar{e}_{si}(k_j) \bar{\zeta}^T({U}_{stf}(k_j),\alpha (k_j)) \hspace{0.5 mm} \Omega
  \end{aligned}
\end{equation}
with the learning rate matrix $\Omega=r I_q$,
\begin{equation}
  \begin{aligned}
    \label{Learning rate}
  &0<r<\frac{2\lambda_{min}(H_2)}{\lambda_{max}^2(H_2)}
  \end{aligned}
\end{equation}
guarantees that\\
i) the parameter identification errors $\tilde{\Phi}$ converges to zero when $\omega(k)=0$.\\
ii) the parameter identification errors $\tilde{\Phi}$ is uniformly ultimately bounded (UUB) when $\omega(k) \neq 0$.
\end{theorem}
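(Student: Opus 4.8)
The plan is to first collapse the update law \eqref{Update Law} into a closed-form recursion for the parameter error $\tilde{\Phi}(k)=\hat{\Phi}(k)-\Phi$. Since $\Phi$ is constant, subtracting it from both sides, substituting the normalized errors \eqref{Normalized System Identification Error}, and collecting the regressor outer products into $H_1$ and $H_2$ of \eqref{Positive Matrix}--\eqref{Positive Matrix 2} yields
\begin{equation}
  \tilde{\Phi}(k+1) = \tilde{\Phi}(k)\,(I_q - r H_2) + W(k),
\end{equation}
where $W(k)=r\,\bar{\omega}(k)\bar{\zeta}^T(k)+r\sum_{j=1}^{s}\bar{\omega}(k_j)\bar{\zeta}^T(k_j)$ gathers the disturbance contributions and vanishes identically when $\omega\equiv 0$. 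Condition~\ref{Rank Condition} enters exactly here: it guarantees $H_2>0$ and hence $\lambda_{min}(H_2)>0$, which is what renders the map $I_q-rH_2$ contractive.

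The analysis then proceeds through the Lyapunov candidate $V(k)=\mathrm{tr}\!\left(\tilde{\Phi}^T(k)\tilde{\Phi}(k)\right)=\lVert \tilde{\Phi}(k)\rVert_F^2$. For part~(i), with $W\equiv 0$, I would expand $\Delta V=V(k+1)-V(k)$ using $(I_q-rH_2)^2=I_q-2rH_2+r^2H_2^2$ together with the cyclic property of the trace, and then apply the two matrix bounds $\mathrm{tr}(H_2\tilde{\Phi}^T\tilde{\Phi})\ge \lambda_{min}(H_2)\,V$ and $\mathrm{tr}(H_2^2\tilde{\Phi}^T\tilde{\Phi})\le \lambda_{max}^2(H_2)\,V$, the latter from $H_2^2\preceq \lambda_{max}^2(H_2)\,I_q$. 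This gives $\Delta V \le -r\bigl(2\lambda_{min}(H_2)-r\lambda_{max}^2(H_2)\bigr)V$, and the stated learning-rate window \eqref{Learning rate} is precisely the condition that makes the bracketed factor positive, so $\Delta V\le -cV$ with $c\in(0,1]$ and $\tilde{\Phi}\to 0$ exponentially.

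For part~(ii) I would repeat the same computation keeping $W(k)\ne 0$, which appends a cross term $2\,\mathrm{tr}\!\left(W^T(k)\tilde{\Phi}(k)(I_q-rH_2)\right)$ and a quadratic term $\lVert W(k)\rVert_F^2$ to $\Delta V$. The crucial observation is that the normalization by $m(k)=\sqrt{\varrho+\zeta^T\zeta}$ forces every regressor to obey $\lVert\bar{\zeta}\rVert<1$ and, via Assumption~\ref{Bounded Terms}, every disturbance to obey $\lVert\bar{\omega}\rVert\le \varepsilon/\sqrt{\varrho}$, so $W(k)$ is uniformly bounded in $k$. I would then split the cross term with Young's inequality, absorbing one part into the negative-definite $-cV$ term and leaving a constant, to obtain $\Delta V \le -c'V + d$ for some $c'>0$ and $d>0$ depending on $\varepsilon,\varrho,s,r$ and the spectrum of $H_2$. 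Hence $\Delta V<0$ whenever $\lVert\tilde{\Phi}\rVert_F^2>d/c'$, which is the UUB claim with ultimate bound $\sqrt{d/c'}$.

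The main obstacle I anticipate is bookkeeping rather than conceptual. Reproducing exactly the window \eqref{Learning rate} hinges on bounding $H_2^2$ by $\lambda_{max}^2(H_2)I_q$ rather than by the looser $\lambda_{max}(H_2)H_2$; and in part~(ii) the Young's-inequality split must be balanced so that the residual of the cross term does not swamp the decay coefficient, preserving $c'>0$. Establishing the uniform boundedness of $W(k)$, which rests entirely on the normalizer $m(k)$ and Assumption~\ref{Bounded Terms}, is what guarantees $d<\infty$ and thus a finite ultimate bound.
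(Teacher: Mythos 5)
Your proposal is correct and follows essentially the same argument as the paper: you derive the affine error recursion induced by \eqref{Update Law}, use the same Lyapunov function up to the constant factor $r$ (your $\lVert\tilde{\Phi}\rVert_F^2 = r\,\mathrm{tr}\{\tilde{\Phi}\,\Omega^{-1}\tilde{\Phi}^T\}$), and arrive at the identical decay coefficient $r\bigl(2\lambda_{min}(H_2)-r\lambda_{max}^2(H_2)\bigr)$, from which the learning-rate window \eqref{Learning rate} follows exactly as in the paper. The only cosmetic deviation is in part (ii), where you absorb the disturbance cross term via Young's inequality to obtain $\Delta V \le -c'V + d$ and the ultimate bound $\sqrt{d/c'}$, whereas the paper bounds $\Delta V$ by a quadratic in $\lVert\tilde{\Phi}\rVert$ and takes its positive root $\tilde{\Phi}_b$ --- equivalent routes to the same UUB conclusion.
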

\begin{proof}
Consider the following Lyapunov function candidate:
\begin{equation}
  \begin{aligned}
    \label{Lyapunov Candidate}
  &V(k) = tr \{\tilde{\Phi}(k) \Omega^{-1} \tilde{\Phi}^T(k)\},  
  \end{aligned}
\end{equation}
where $\tilde{\Phi}(k)$ and $\Omega=r I_q$ are the parameter identification errors and the learning rate matrix defined in \eqref{System Identification Error} and \eqref{Learning rate}, respectively. The derivative of the Lyapunov candidate \eqref{Lyapunov Candidate} in the discrete-time domain is obtained as
\begin{equation}
  \begin{aligned}
    \label{Lyapunov Candidate Derivative}
  &V(k+1)-V(k)\\ 
  &= tr\{\tilde{\Phi}(k+1) \Omega^{-1} \tilde{\Phi}^T(k+1)- \tilde{\Phi}(k) \Omega^{-1} \tilde{\Phi}^T(k)\}\\ 
  &= tr\{(\tilde{\Phi}(k+1)-\tilde{\Phi}(k)) \Omega^{-1} (\tilde{\Phi}(k+1)+\tilde{\Phi}(k))^T\}\\
  &=tr\{(-\bar{e}_{si}(k) \bar{\zeta}^T(k) \hspace{0.5 mm} \Omega -\sum\limits_{j=1}^{s} \bar{e}_{si}(k_j) \bar{\zeta}^T(k_j) \hspace{0.5 mm} \Omega)\Omega^{-1}\\
  &\hspace{4.5 mm}(-\bar{e}_{si}(k) \bar{\zeta}^T(k) \hspace{0.5 mm} \Omega -\sum\limits_{j=1}^{s} \bar{e}_{si}(k_j) \bar{\zeta}^T(k_j) \hspace{0.5 mm} \Omega + 2\tilde{\Phi}(k))^T\}\\
  &=tr\{(-\tilde{\Phi}(k) H_2+H_3) \hspace{0.5 mm} (-\tilde{\Phi}(k) H_2\Omega+H_3\Omega+ 2\tilde{\Phi}(k))^T\},\\
  \end{aligned}
\end{equation}
where $H_3=\bar{\omega}(k)\bar{\zeta}^T(k)+\sum\limits_{j=1}^{s} \bar{\omega}(k_j) \bar{\zeta}^T(k_j)$, and $\|H_3\| \leq \bar{\varepsilon}_n$. Here, one can obtain $\bar{\varepsilon}_n$ using the bound of $\omega$. 
Now, one has
\begin{equation}
  \begin{aligned}
    \label{Lyapunov Candidate Derivative 2}
  &V(k+1)-V(k)\\ 
  &=tr\{\tilde{\Phi}(k) P_1 \tilde{\Phi}^T(k)+\tilde{\Phi}(k) P_2+P_3 \tilde{\Phi}^T(k)+P_4\},\\
  \end{aligned}
\end{equation}
where $P_1=H_2 \Omega^T H^T_2-2H_2$, $P_2=-H_2\Omega^TH^T_3$, $P_3=-H_3 \Omega^T H^T_2+2H_3$, and $P_4=H_3\Omega^TH^T_3$.
Thus, one has
\begin{equation}
  \begin{aligned}
    \label{Lyapunov Candidate Derivative 3}
  &V(k+1)-V(k) \leq Q_1 \|\tilde{\Phi}(k)\|^2 + Q_2 \|\tilde{\Phi}(k)\| + Q_3,\\
  \end{aligned}
\end{equation}
where $Q_1=r \lambda^2_{max}(H_2)-2\lambda_{min}(H_2)$, $Q_2=r\lambda_{min}(H_2)\bar{\varepsilon}_n$ $+r\lambda_{min}(H_2)\bar{\varepsilon}_n+2\bar{\varepsilon}_n$, and $Q_3=r\bar{\varepsilon}^2_n$. Now, using \eqref{Learning rate}, it is clear that $Q_1 < 0$; therefore, using \eqref{Lyapunov Candidate}, one has the following inequality when $\omega(k)=0$:
\begin{equation}
    \label{Lyapunov Candidate Derivative 4}
  V(k+1)-V(k) \leq Q_1 \|\tilde{\Phi}(k)\|^2\leq r \hspace{0.5 mm} Q_1 V(k)< 0.
\end{equation}
Thus, the parameter identification errors $\tilde{\Phi}$ converge to zero. Consequently, from \eqref{System Identification Error}, the convergence of $\tilde{\Phi}$ indicates that the system identification error $e_{si}$ converges to zero. This completes the proof of the first part.

\hspace{-3.55 mm}Now, for $\omega(k) \neq 0$, since $\|\tilde{\Phi}(k)\| \geq 0$, $Q_1 < 0$, $Q_2 \geq 0$, and $Q_3 \geq 0$, the only valid non-negative root of \eqref{Lyapunov Candidate Derivative 3} is
\begin{equation}
  \begin{aligned}
    \label{parameter bound}
  &\tilde{\Phi}_b=\frac{-Q_2-\sqrt{Q^2_2-4 Q_1 Q_3}}{2Q_1}.\\
  \end{aligned}
\end{equation}
Thus, when $\|\tilde{\Phi}(k)\| > \tilde{\Phi}_b$, one has
\begin{equation}
  \begin{aligned}
    \label{Lyapunov Candidate Derivative 5}
  &V(k+1)-V(k) < 0,  
  \end{aligned}
\end{equation}
which makes $\tilde{\Phi}(k)$ to enter and stay in the compact set $S_{\tilde{\Phi}}=\{\tilde{\Phi}: \|\tilde{\Phi}\| \leq \tilde{\Phi}_b\}$; therefore, one can conclude that the system identification error $e_{si}$ converges to a small region around zero using \eqref{System Identification Error}. This completes the proof of the second part.
\end{proof}

\begin{remark} [System Identification]
\label{System Identification}
Theorem 1 presents a learning law to guarantee a reasonable performance for the proposed STF-based concurrent learning. We investigate two cases: i) the real system without any external disturbances, and ii) the real system with external disturbances. For the first case, i.e., $\omega(k) = 0$, we prove that the system identification error converges to zero, and the real system is identified perfectly. For the second case with $\omega(k) \neq 0$, although it is known that the system identification error does not converge to zero, we prove that it is uniformly ultimately bounded (UUB), which means that the system identification error converges to a small region around zero \eqref{parameter bound}. Thus, it is clear that the real system is identified with a reasonable performance even in the presence of external disturbances. To see how we use the learning law in the proposed identifier, we present Algorithm 1 and the corresponding explanations later.
\end{remark}

\begin{remark} [Rich Data]
\label{Rich Data}
The singular value maximizing approach \cite{chowdhary2011singular} is used for recording rich data in the history stack \eqref{Recorded Data}. When Condition \ref{Rank Condition} is satisfied, the richness of the history stack \eqref{Recorded Data} suffices to achieve the results in Theorem \ref{Discrete-Time Concurrent Learning}. However, replacing new rich data with old data in the history stack can improve the learning performance if the new data increases $\lambda_{min}(H_2)$ and/or decreases $\lambda_{max}(H_2)$, leading to a reduced convergence time.
\end{remark}

\begin{remark} [Comparison]
\label{Advantages}
As compared to \cite{chen2020online},  the regressor vector $\zeta({U}_{stf},\alpha)$ has to be persistently exciting (see Definition \ref{def pe}), imposing  conditions on past, current, and future regressor vectors that is difficult or impossible  to verify online. Instead, Condition \ref{Rank Condition} only deals with a subset of past data, which makes it easy to monitor. Also, it is convenient to check whether replacing new data will increase $\lambda_{min}(H_2)$ and/or decrease $\lambda_{max}(H_2)$ to reduce the convergence time. In comparison with \cite{chowdhary2010concurrent}, the STF-based concurrent learning does not require the measurement or estimation of the derivatives of the system states \eqref{system}. In comparison with \cite{vahidi2020memory,vahidi2021learning}, the STF-based concurrent learning does not require a filter regressor for obviating the derivatives of the system states. Compared to our previous work \cite{vahidi2022data}, the discrete-time concurrent learning technique is extended to handle both structured and unstructured uncertainties.
\end{remark}

The steps for transferring from the STF model structure to the state-space form are provided in Appendix~A. For the offline system identification and control policy learning, using Theorem \ref{Discrete-Time Concurrent Learning} and Appendix~A, one can see that $e_{s}$ and $e_{c}$ converge to zero when $\omega(k)=0$ and are UUB when $w \neq 0$. Therefore, after learning process, one can conclude that $e_{s}$ and $e_{c}$ are zero for $w=0$ and around zero for $w \neq 0$.

\begin{figure}[!ht]
     \centering
     \includegraphics[width=1\linewidth]{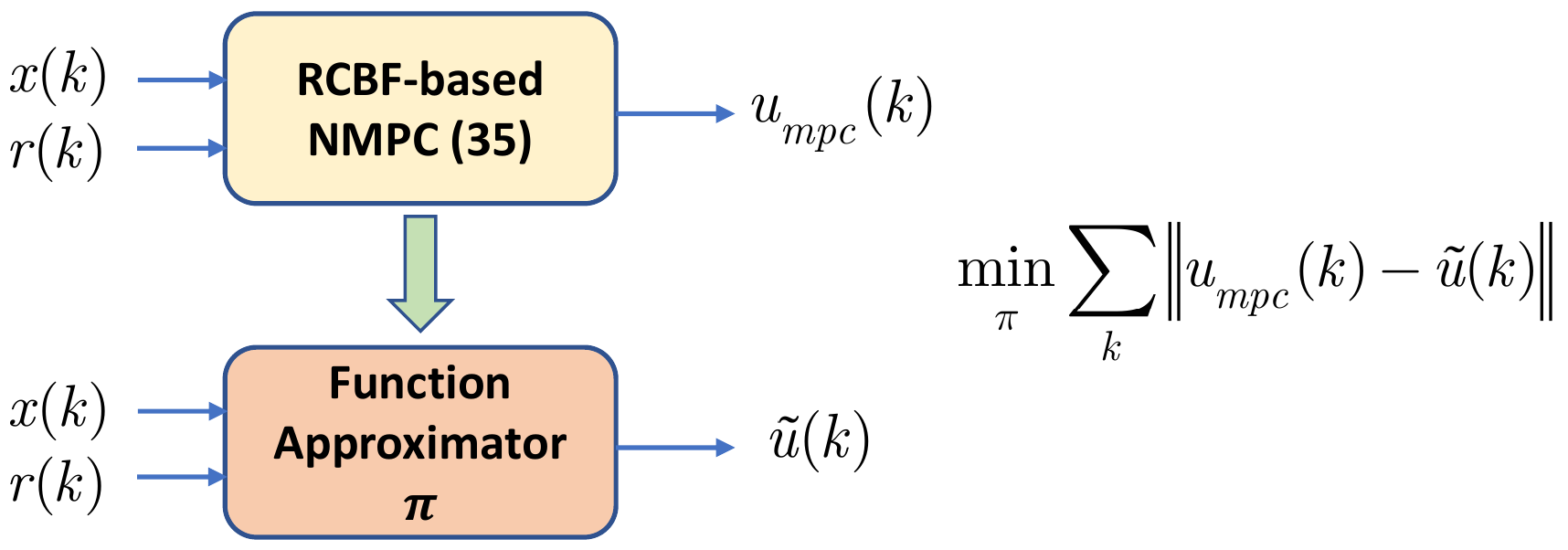}
     \caption{Schematic of NMPC policy function approximation. 
     }
     \label{fig:policyLearning}
\end{figure}

\subsection{Data-Driven Safe Predictive Control with Function Approximations}
With the identified STF model, one can use the steps in Appendix~A to transform the STF model to a state-space model and implement the NMPC \eqref{NMPC}. However, in many real-time engineering systems, it is very computationally expensive, especially for nonlinear systems, to implement the NMPC. Therefore, in this section, we develop a data-driven safe predictive control framework using policy function approximators. As illustrated in Fig.~\ref{fig:policyLearning},  we utilize a function approximator (e.g., neural network, Gaussian process regression, or STF) to approximate the NMPC policy, which can be viewed as a mapping from the state $x(k)$ and the reference $r(k)$ to the control $u_{mpc}(k)$. The function approximator can be obtained by minimizing the accumulated loss, i.e., the difference between the NMPC $u_{mpc}$ and the predicted control $\tilde{u}(k)$.

With the trained function approximator, the online control computation is greatly reduced as only simple algebraic computations are needed as compared to the original constrained nonlinear optimization. Due to the control policy learning error, the system identification error, and the unknown disturbances, the constraints in \eqref{Constraints} may not be satisfied. Therefore, we develop a RCBF-based NMPC such that the function-approximated control policy can still guarantee constraint satisfaction in the presence of $e_{c}$, $e_{s}$, and $w$.

Specifically, suppose that the  dynamics drifting \eqref{identification error} due to the system identification error and the control policy learning error are bounded by $\varepsilon_s$ and $\varepsilon_c$, respectively, i.e., $\|e_{s}(k)\| \leq \varepsilon_s$ and $\|e_{c}(k)\| \leq \varepsilon_c$ for all time steps $k$. Note $\varepsilon_s$ and $\varepsilon_c$ can be obtained by empirically evaluating the bounds using trajectories from design of experiments. Consider the state constraints $X=\{x \in \mathbb{R}^n: h(x) \leq 0\}$. We modify the nominal NMPC in \eqref{NMPC} with the following RCBF-based NMPC design:

\begin{equation}
  \begin{aligned}
    \label{RCBF-based NMPC}
    &(\mathbf{x}^{*},\mathbf{u}^{*},\mathbf{\gamma}^{*}) = \underset{\mathbf{\hat{x}},\mathbf{u},\mathbf{\gamma}}{\arg\min} \hspace{1 mm} J_N(\mathbf{\hat{x}},\mathbf{u}) + \varphi(\gamma)\\
    &s.t.\quad \hat{x}(k+1) = \hat{f}(\hat{x}(k),u(k)),\quad \hat{y}(k) = g(\hat{x}(k)),\\
    & \hspace{8.5 mm} \hat{x}(0) = \hat{x}_0,\quad u(k) \in U,\\
    & \hspace{8.5 mm} h(\hat{x}(k+1)) \leq (1-\gamma)h(\hat{x}(k)) -\gamma \eta (\varepsilon_w + \varepsilon_s + \varepsilon_c), \\
    & \hspace{8.5 mm} 0 < \gamma \leq 1,
  \end{aligned}
\end{equation}
where $\varphi(\gamma) = P \gamma^2$ is a regularization term on the optimal variable $\gamma$ with $P > 0$ being a weighting factor. The following theorem shows that approximating the NMPC policy based on the RCBF-based NMPC \eqref{RCBF-based NMPC}, particularly the last constraint, can robustly guarantee constraint satisfaction when using the function-approximated control policy $\tilde{u}(k)$ for online controls.

The following theorem enables us to consider a robust safety constraint using the identified model such that the nonlinear system (1) is safe in the presence of the model uncertainties (i.e., $w, e_{s}, e_{c}$). If the optimization problem (35) is feasible, it is guaranteed to give a control input that satisfies the true CBF constraint (9).


\begin{theorem} [Robust Constraint Satisfaction]
\label{Robust Control Barrier Function}
Consider the real system \eqref{system}, the identified model \eqref{nominal model}, the drift dynamics error \eqref{identification error}, and the safe set \eqref{Safe set}. Define $h_r(\hat{x}) = h(\hat{x}) +\eta (\varepsilon_w + \varepsilon_s + \varepsilon_c)$ as the RCBF. The safety constraint
\begin{equation}
  \begin{aligned}
    \label{RCBF Constraint}
  &h(\hat{x}(k+1)) \leq (1-\gamma)h(\hat{x}(k)) -\gamma \eta (\varepsilon_w + \varepsilon_s + \varepsilon_c)  
  \end{aligned}
\end{equation}
guarantees that the safe set $S$ is forward invariant in the presence of control learning error $e_{c}$, system identification error $e_{s}$, and unknown disturbance $w$.
\end{theorem}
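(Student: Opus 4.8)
The plan is to establish the one-step barrier inequality \eqref{CBF condition} for the real closed loop and then invoke the exponential-decrease form \eqref{CBF condition2} to conclude forward invariance of the safe set $S$ in \eqref{Safe set}. The guiding idea is that the tightening margin $\eta(\varepsilon_w+\varepsilon_s+\varepsilon_c)$ appearing in \eqref{RCBF Constraint} is sized precisely to dominate, through the Lipschitz bound \eqref{DT Lipschitz constant}, the one-step gap between the real successor state and the nominal predicted state. I would fix a time step $k$ and use the receding-horizon initialization $\hat{x}(0)=x(k)$, so that the nominal and real states coincide at the current instant; this gives $h(\hat{x}(k))=h(x(k))$ exactly and lets the nominal barrier value in \eqref{RCBF Constraint} stand in for the real one without incurring any Lipschitz gap at the current point.

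First I would bound the one-step prediction mismatch. Writing the real dynamics in the disturbed form \eqref{disturbed system}, namely $x(k+1)=\hat{f}(\hat{x}(k),\tilde{u}(k))-e_{c}(k)-e_{s}(k)+w(k)$, and identifying the nominal successor $\hat{x}(k+1)=\hat{f}(\hat{x}(k),\tilde{u}(k))$ entering the constraint, the mismatch telescopes to $x(k+1)-\hat{x}(k+1)=-e_{c}(k)-e_{s}(k)+w(k)$. The triangle inequality together with $\|w\|\le\varepsilon_w$ from Assumption \ref{Bounded Terms} and the drift bounds $\|e_{s}\|\le\varepsilon_s$, $\|e_{c}\|\le\varepsilon_c$ then yields $\|x(k+1)-\hat{x}(k+1)\|\le\varepsilon_w+\varepsilon_s+\varepsilon_c$.

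Next I would push this state mismatch through the $\eta$-Lipschitz property \eqref{DT Lipschitz constant} to obtain the barrier-value gap $h(x(k+1))\le h(\hat{x}(k+1))+\eta(\varepsilon_w+\varepsilon_s+\varepsilon_c)$, and then substitute the tightened nominal constraint \eqref{RCBF Constraint} for $h(\hat{x}(k+1))$. Because the margin in \eqref{RCBF Constraint} is chosen exactly to absorb the Lipschitz slack $\eta\|x(k+1)-\hat{x}(k+1)\|$, the two margins cancel, and using $h(\hat{x}(k))=h(x(k))$ one is left with the true barrier decrease $h(x(k+1))\le(1-\gamma)h(x(k))$, i.e., \eqref{CBF condition} (equivalently \eqref{CBF condition2}) for the real trajectory. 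Forward invariance of $S$ is then immediate: since $0<\gamma\le1$ we have $1-\gamma\in[0,1)$, so $h(x(k))\le0$ forces $h(x(k+1))\le0$, and induction over $k$ keeps the real state in $S$ for all future steps.

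I expect the delicate step to be the mismatch accounting in the first stage. The error definitions \eqref{identification error} mix the nominal control $u$ and the approximated control $\tilde{u}$, so one must verify carefully that the telescoping in \eqref{disturbed system} leaves exactly $-e_{c}-e_{s}+w$ with no leftover $f(x,u)-f(x,\tilde{u})$ residual; only then does the mismatch bound equal $\varepsilon_w+\varepsilon_s+\varepsilon_c$ and match the margin in \eqref{RCBF Constraint} needed to reproduce \eqref{CBF condition}. A secondary subtlety is recursive feasibility across the receding-horizon resets --- the one-step argument guarantees the next real state lies in $S$, but maintaining solvability of the tightened problem after re-initializing $\hat{x}(0)$ to the new real state is what the feasibility hypothesis stated just before the theorem is meant to supply.
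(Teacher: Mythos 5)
Your first two stages (the one-step mismatch bound from \eqref{disturbed system} and the Lipschitz transfer via \eqref{DT Lipschitz constant}) are exactly the ingredients of the paper's proof, but your final cancellation step contains a genuine arithmetic error that breaks the argument. Writing $\varepsilon := \varepsilon_w+\varepsilon_s+\varepsilon_c$ and substituting \eqref{RCBF Constraint} into $h(x(k+1)) \le h(\hat{x}(k+1)) + \eta\varepsilon$ with $h(\hat{x}(k))=h(x(k))$ gives
\begin{equation*}
h(x(k+1)) \;\le\; (1-\gamma)h(x(k)) - \gamma\eta\varepsilon + \eta\varepsilon \;=\; (1-\gamma)\bigl(h(x(k)) + \eta\varepsilon\bigr),
\end{equation*}
because $-\gamma\eta\varepsilon+\eta\varepsilon=(1-\gamma)\eta\varepsilon$, which vanishes only when $\gamma=1$. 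The margins do \emph{not} cancel, so you do not obtain $h(x(k+1))\le(1-\gamma)h(x(k))$, and your induction on the real trajectory does not close: if $h(x(k))\in(-\eta\varepsilon,\,0]$, your (corrected) bound permits $h(x(k+1))>0$, i.e., escape from $S$. Nor can the induction be repaired by strengthening the hypothesis to the tightened invariant $h(x(k))\le-\eta\varepsilon$: the recursion then returns only $h(x(k+1))\le(1-\gamma)\bigl(h(x(k))+\eta\varepsilon\bigr)\le 0$, not $h(x(k+1))\le-\eta\varepsilon$, so the invariant degrades after one step and cannot be propagated.

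The paper avoids this trap by never running the induction on the real trajectory at all. It defines $h_r(\hat{x})=h(\hat{x})+\eta\varepsilon$ and observes that \eqref{RCBF Constraint} is precisely the CBF condition \eqref{CBF condition} applied to $h_r$ along the \emph{nominal} trajectory: the additive constant cancels in the difference, $\triangle h_r(\hat{x}(k))=\triangle h(\hat{x}(k))$, while $-\gamma h_r(\hat{x}(k))=-\gamma h(\hat{x}(k))-\gamma\eta\varepsilon$, which reproduces \eqref{Robust CBF condition} and \eqref{Robust CBF condition2}. The standard invariance property \eqref{CBF condition2}, applied to $h_r$, then propagates $h_r(\hat{x}(k))\le 0$ geometrically along the nominal dynamics, and the Lipschitz slack is paid only \emph{once}, when comparing real and nominal states at the same time index via the chain \eqref{Mean Value}: $h(x_k)\le h_r(\hat{x}_k)\le 0$. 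In other words, the decay is carried by $h_r$ on the nominal system and the real state merely piggybacks inside the $\eta\varepsilon$ tube; in your scheme, by contrast, the slack $\eta\varepsilon$ is re-injected into the recursion at every step and the $(1-\gamma)$ contraction must fight it, which it cannot do within $S$ itself. Your closing remark on recursive feasibility is apt but secondary --- the paper also defers that issue (see Remark~\ref{Feasibility}) --- whereas the cancellation step is where your proof actually fails.
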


\begin{proof}
For notational simplicity, we use $x_k$ to denote $x(k)$ in the following equation. Using the mean value theorem, the identified model \eqref{nominal model}, and the nonlinear system \eqref{disturbed system}, one has
\begin{equation}
  \begin{aligned}
    \label{Mean Value}
  &h(x_k) = h(\hat{x}_k + (x_k-\hat{x}_k)) \\
  & \hspace{8.5 mm}= h(\hat{x}_k) + (h(x_k)-h(\hat{x}_k))\\
  & \hspace{8.5 mm} \leq h(\hat{x}_k) + \left| h(x_k)-h(\hat{x}_k) \right|\\
  & \hspace{8.5 mm} \leq h(\hat{x}_k) + \eta \left\| x_k-\hat{x}_k \right\| \\
  & \hspace{8.5 mm} \leq h(\hat{x}_k) + \eta \left( \left\| w_{k-1} \right\| + \left\| {e_{fs}}_{k-1} \right\| + \left\| {e_{fc}}_{k-1} \right\| \right) \\
  & \hspace{8.5 mm} \leq h(\hat{x}_k) + \eta (\varepsilon_w + \varepsilon_s + \varepsilon_c)\\
  & \hspace{8.5 mm} = h_r(\hat{x}_k).
  \end{aligned}
\end{equation}
Thus, $h_r(\hat{x}_k) \leq 0$ implies that $h(x_k) \leq 0$.\\
Using \eqref{CBF condition}, the following safety constraint guarantees that $h_r(\hat{x}(k)) \leq 0$ for all time steps $k$ as
\begin{equation}
\begin{aligned}
    \label{Robust CBF condition}
  &\triangle h_r(\hat{x}(k)) \leq -\gamma h_r(\hat{x}(k)).
  \end{aligned}
\end{equation}
Now, using \eqref{Mean Value}, one can rewrite \eqref{Robust CBF condition} as
\begin{equation}
\begin{aligned}
    \label{Robust CBF condition2}
  &h(\hat{x}(k+1)) - h(\hat{x}(k)) \leq - \gamma \left( h(\hat{x}(k)) +\eta (\varepsilon_w + \varepsilon_s + \varepsilon_c) \right),
  \end{aligned}
\end{equation}
which is \eqref{RCBF Constraint}, and the proof is completed.
\end{proof}

\begin{remark} [Relaxed RCBF]
\label{Robust Safety}
In the optimization problem \eqref{RCBF-based NMPC}, if $\gamma$ becomes relatively small, the sublevel set of the RCBF $h$ will be smaller, and the system tends to be safer; however, the intersection between the reachable set and the sublevel set might be infeasible. When $\gamma$ becomes larger, the sublevel set will be increased in the state space, which makes the optimization problem more likely to be feasible; however, the RCBF constraint might not be active during the optimization. If $\gamma = 1$, the relaxed RCBF constraint becomes equivalent to a simple distance constraint; consequently, the NMPC needs a longer horizon of prediction to generate an expected system safety performance in the closed-loop trajectory. Therefore, it is not recommended to set a relatively too small value for $P$, which would over-relax the RCBF constraint and make the optimized value $\gamma$ closer to 1.
\end{remark}

\begin{remark} [Feasibility]
\label{Feasibility}
The recursive feasibility is generally not guaranteed for the MPC in the presence of both safety constraint and control input constraint \cite{borrelli2017predictive,lars2011nonlinear}. In \cite{taylor2020adaptive}, the control input constraint is relaxed to ensure the feasibility of the optimization problem with the safety constraint. However, in this paper, the decay rate of the RCBF is relaxed from a fixed value to an optimal variable so that it improves both safety and feasibility of the NMPC problem \eqref{RCBF-based NMPC}. Formal guarantee on the recursive feasibility in the presence of both safety and control input constraints requires more analysis and will be addressed in our future work. It is worth noting that even in the absence of control input constraint, the optimization problem \eqref{RCBF-based NMPC} will be infeasible if the model uncertainties in the RCBF constraint are too high. The optimization problem \eqref{RCBF-based NMPC} is formulated with an identified model, and the proposed offline system identification technique is utilized to well capture the system dynamics, ensuring that the model uncertainties will be varying within bounded compatible sets online (i.e., $\varepsilon_w, \varepsilon_s, \varepsilon_c$). In particular, under the assumption that the external disturbance $w$ is bounded (see Assumption 1), our STF-based concurrent learning ensures that the bounds on the system identification error and the control learning error are zero and small values close to zero for $w = 0$ and $w \neq 0$, respectively (see Theorem 1 and Remark 1). Therefore, thanks to Theorem 1, the RCBF constraint \eqref{RCBF Constraint} will be satisfied, and the optimization problem \eqref{RCBF-based NMPC} without control input constraints will be feasible at all times. 
\end{remark}

Now, the proposed discrete-time STF-based concurrent learning is used again to learn the RCBF-based NMPC policy \eqref{RCBF-based NMPC}, which $(x(k),r(k:k+N))$ and $u_{mpc}(k)$ are respectively considered as the input and the output of the policy. 

\subsection{Online Adaptive Control Policy}
Although the function-approximated control policy $\tilde{u}(k)$ may already lead to a reasonable closed-loop performance for many cases, one can see a performance loss for a system due to i) the choice of the hyperparameters that decide the architecture of the STF-based identification law and the function-approximated control policy, and ii) insufficient training data in some regions of the feasible state space. To minimize this performance loss, caused by $e_{c}$, $e_{s}$, and $w$, it may be desirable to adapt $\tilde{u}(k)$ online. 


\begin{corollary}[Steady-State Optimization Problem \cite{krishnamoorthy2021adaptive}]
\label{lemm1}
Using the RCBF-based NMPC policy, a unique equilibrium pair $(x_e,u_e)$ minimizes the steady-state optimization problem
\begin{equation}
  \begin{aligned}
    \label{Steady-State NMPC}
    &(\mathbf{x}_{e},\mathbf{u}_{e},\mathbf{\gamma}_{e}) = \underset{\mathbf{\hat{x}},\mathbf{u},\mathbf{\gamma}}{\arg\min} \hspace{1 mm} l(\hat{x},u,\gamma)\\
    &s.t.\quad \hat{x} = \hat{f}(\hat{x},u),\quad \hat{y} = g(\hat{x})\\
    & \hspace{8.5 mm} u \in U,\\
    & \hspace{8.5 mm} h(\hat{x}) \leq (1-\gamma)h(\hat{x}) -\gamma \eta (\varepsilon_w + \varepsilon_s + \varepsilon_c),
  \end{aligned}
\end{equation}
where $l(\hat{x},u,\gamma) = \phi(\hat{x},u,\hat{y},r) + \psi(\hat{x},\hat{y},r) + \varphi(\gamma)$ represents the steady-state cost function. \hspace{42 mm} $\square$
\end{corollary}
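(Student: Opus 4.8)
The plan is to prove the two halves of the claim — existence of a cost-minimizing equilibrium and its uniqueness — by treating \eqref{Steady-State NMPC} as a well-posed parametric nonlinear program and importing the regularity assumptions used for the setpoint problem in \cite{krishnamoorthy2021adaptive}. As a preliminary simplification I would note that at an equilibrium $\hat{x}(k+1)=\hat{x}(k)=\hat{x}$, so $\triangle h(\hat{x})=0$ and, after cancelling $h(\hat{x})$ and using $\gamma>0$, the RCBF constraint collapses to the $\gamma$-free state constraint $h(\hat{x})\le -\eta(\varepsilon_w+\varepsilon_s+\varepsilon_c)$. The decay-rate variable $\gamma$ then enters only through the regularizer $\varphi(\gamma)=P\gamma^2$, so the substantive content is the pair $(x_e,u_e)$ constrained to the equilibrium set $\{\hat{x}=\hat{f}(\hat{x},u)\}\cap(U\times\{h\le -\eta(\varepsilon_w+\varepsilon_s+\varepsilon_c)\})$.

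For existence I would invoke the Weierstrass theorem. The objective $l=\phi+\psi+\varphi$ is continuous, and its quadratic state–input part $\hat{x}^{T}(Q+Q_N)\hat{x}+u^{T}Ru+\cdots$ is coercive in $(\hat{x},u)$ because $Q,R,Q_N$ are positive definite; hence the intersection of any sublevel set with the closed feasible set is compact. Nonemptiness of the feasible set is exactly the steady-state feasibility secured by the bounded-uncertainty assumption together with Theorem~\ref{Robust Control Barrier Function} and the discussion in Remark~\ref{Feasibility}. A minimizer therefore exists, and because the equality constraint $\hat{x}=\hat{f}(\hat{x},u)$ is enforced, every minimizer $(x_e,u_e)$ is by construction a fixed point of the identified dynamics, i.e. an equilibrium pair.

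For uniqueness I would argue from strict convexity of $l$: with $Q,R,P,Q_N,P_N$ positive definite and the scalar weight in $\varphi$ positive, $l$ is strictly convex in $(\hat{x},u,\gamma)$ once the output term $(\hat{y}-r)^{T}P(\hat{y}-r)$ is convex in $\hat{x}$ — which holds because the local-linear STF realization \eqref{local models} renders $g$ affine on the operating regime. Strict convexity forces at most one global minimizer over a convex feasible set, so the single genuine difficulty is that the equilibrium manifold $\{\hat{x}=\hat{f}(\hat{x},u)\}$ is generally nonconvex. I would remove this by assuming, as in \cite{krishnamoorthy2021adaptive}, that $I-\partial\hat{f}/\partial\hat{x}$ is nonsingular at the solution; the implicit function theorem then yields a single-valued equilibrium map $u\mapsto x_e(u)$, reducing \eqref{Steady-State NMPC} to a minimization over $(u,\gamma)$ whose reduced cost stays strictly convex, giving a unique $(x_e,u_e)$ (equivalently, LICQ together with the second-order sufficient conditions makes the KKT point isolated, and strict convexity upgrades it to the unique global minimizer).

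The step I expect to be the main obstacle is precisely this last one: strict convexity of the cost by itself does not overcome the nonconvex equality constraint, so uniqueness must rest on a single-valued parametrization of the equilibria — equivalently on the LICQ and second-order regularity inherited from \cite{krishnamoorthy2021adaptive} — rather than on convexity of the feasible set. Existence and the equilibrium-pair property, by contrast, follow routinely from coercivity of the cost and the enforced equality constraint.
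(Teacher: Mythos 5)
The paper supplies no proof of this corollary at all: it is imported verbatim from \cite{krishnamoorthy2021adaptive}, and the closing $\square$ marks the proof as omitted --- uniqueness of the steady-state optimizer is, in effect, a standing regularity assumption inherited from that reference (it is precisely what Condition~2 and the KKT-based adaptation in Theorem~3 lean on). So there is no internal argument to compare yours against step by step; your attempt must be judged on its own terms. The existence half is essentially sound: the equilibrium cancellation reducing the RCBF constraint to $h(\hat{x}) \leq -\eta(\varepsilon_w+\varepsilon_s+\varepsilon_c)$ is correct for $\gamma>0$, and Weierstrass with coercivity from $Q,R,Q_N>0$ plus closedness and nonemptiness of the feasible set yields a minimizer that is an equilibrium by the constraint $\hat{x}=\hat{f}(\hat{x},u)$. (One wrinkle you gloss over: after the cancellation, $\gamma$ survives only through $P\gamma^2$ over $0<\gamma\leq 1$, whose infimum is not attained, so strictly speaking $\gamma_e$ fails to exist unless $\gamma=0$ is admitted or the regularizer is dropped at steady state.)

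The uniqueness half contains a genuine gap that your closing caveat flags but does not repair. First, strict convexity of $l$ in $(\hat{x},u)$ already fails in this setting: your claim that $g$ is rendered affine by the local-linear STF realization \eqref{local models} is wrong, because the softmax weights $\alpha_i$ depend nonlinearly on the regressor, so the composite model \eqref{Composite Structure} is not affine in the state, and $(\hat{y}-r)^{T}P(\hat{y}-r)$ need not be convex in $\hat{x}$. Second, even granting a strictly convex cost, the reduction step is invalid: nonsingularity of $I-\partial \hat{f}/\partial \hat{x}$ gives via the implicit function theorem only a \emph{locally} single-valued branch $u\mapsto x_e(u)$; several disjoint equilibrium branches may coexist, and composing a strictly convex cost with this nonlinear map does not leave the reduced cost convex, so ``the reduced cost stays strictly convex'' is unjustified. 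Your fallback --- that LICQ plus second-order sufficiency makes the KKT point isolated and strict convexity then ``upgrades'' it to the unique global minimizer --- is a non sequitur: over a nonconvex equality manifold, isolated local minimizers can be multiple, and convexity of the objective alone cannot exclude this. The honest conclusion, consistent with how the paper itself treats the statement, is that uniqueness must be \emph{assumed} (as a regularity hypothesis carried over from \cite{krishnamoorthy2021adaptive}), not derived from the paper's stated hypotheses.
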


\begin{condition}[Steady-State Function-Approximated Control]
\label{cond1}
$(x^{\prime}_e,u^{\prime}_e)$ is the asymptotically stable equilibrium point of the nominal model \eqref{nominal model} under the approximated control policy $\tilde{u}$.
\end{condition}

\begin{lemma}[Modified Steady-State Optimization Problem]
\label{Modified Steady-State Optimization Problem}
For the equilibrium point $(x_e,u_e)$, the steady-state optimization problem \eqref{Steady-State NMPC} can implicitly be written as
\begin{equation}
  \begin{aligned}
    \label{Modified Steady-State NMPC}
    &\mathbf{u}_{e} = \underset{\mathbf{u}}{\arg\min} \hspace{1 mm} \tilde{l}(u)\\
    &s.t.\quad \tilde{h}(u)\leq 0,
  \end{aligned}
\end{equation}
where the constraints of the optimization problem \eqref{Steady-State NMPC} are collectively denoted as $\tilde{h}(u)$.
\end{lemma}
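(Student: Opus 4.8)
The plan is to eliminate the equality-constrained state and output variables from \eqref{Steady-State NMPC} by expressing them as functions of $u$, thereby collapsing the decision variables down to $u$ alone. First I would treat the steady-state fixed-point relation $\hat{x} = \hat{f}(\hat{x},u)$ as an implicit equation in $\hat{x}$ parameterized by $u$. Defining $\Gamma(\hat{x},u) := \hat{x} - \hat{f}(\hat{x},u)$, the constraint reads $\Gamma(\hat{x},u)=0$ with $\Gamma(x_e,u_e)=0$, and its state Jacobian is $\partial_{\hat{x}}\Gamma = I_n - \partial_{\hat{x}}\hat{f}$. Evaluated at $(x_e,u_e)$ this matrix is nonsingular, because Condition \ref{cond1} makes the equilibrium asymptotically stable and hence rules out $1$ as an eigenvalue of $\partial_{\hat{x}}\hat{f}$. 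The implicit function theorem then furnishes a continuously differentiable map $\hat{x}=\chi(u)$ on a neighborhood of $u_e$, with $\chi(u_e)=x_e$, satisfying $\chi(u)=\hat{f}(\chi(u),u)$ identically.

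Next I would substitute this map to remove $\hat{x}$ and $\hat{y}=g(\hat{x})$ from both the cost and the constraints. Replacing $\hat{x}$ by $\chi(u)$ and $\hat{y}$ by $g(\chi(u))$ in $l(\hat{x},u,\gamma)=\phi(\hat{x},u,\hat{y},r)+\psi(\hat{x},\hat{y},r)+\varphi(\gamma)$ yields a quantity depending only on $u$ and $\gamma$. At steady state the RCBF constraint simplifies: since $\hat{x}(k{+}1)=\hat{x}(k)=\hat{x}$, the inequality collapses to $\gamma\big(h(\chi(u))+\eta(\varepsilon_w+\varepsilon_s+\varepsilon_c)\big)\le 0$, which for every admissible $\gamma\in(0,1]$ is equivalent to the $\gamma$-free condition $h(\chi(u))+\eta(\varepsilon_w+\varepsilon_s+\varepsilon_c)\le 0$. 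Consequently $\gamma$ decouples from the feasibility of $u$ and enters only through the separable penalty $\varphi(\gamma)=P\gamma^2$; I would optimize it out and absorb the resulting penalty value into the reduced objective, defining $\tilde{l}(u)$ as the cost $l$ evaluated along $\hat{x}=\chi(u)$ after minimizing over $\gamma$.

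I would then collect every remaining restriction on $u$ into a single vector inequality: the input constraint $u\in U$ (written through its defining inequalities) together with the reduced RCBF condition $h(\chi(u))+\eta(\varepsilon_w+\varepsilon_s+\varepsilon_c)\le 0$ are stacked into $\tilde{h}(u)\le 0$. With the equality constraints eliminated and both cost and constraints expressed purely in $u$, problem \eqref{Steady-State NMPC} takes exactly the form \eqref{Modified Steady-State NMPC}. Because the substitution $\hat{x}=\chi(u)$ is a bijection between the equality-feasible points of the two formulations near the equilibrium and preserves objective values, and Corollary \ref{lemm1} guarantees $(x_e,u_e)$ is the unique minimizer of \eqref{Steady-State NMPC}, the reduced problem retains $u_e$ as its unique minimizer.

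The hard part will be establishing the nonsingularity of $I_n-\partial_{\hat{x}}\hat{f}$, which is what legitimizes the implicit function theorem and guarantees that the steady-state map $\chi$ is well defined; this is precisely where Condition \ref{cond1} is indispensable. A secondary subtlety is the clean disposal of $\gamma$: one must verify that minimizing it out leaves the feasible set in $u$ unchanged, which rests on the observation that the steady-state RCBF constraint is independent of $\gamma$ over $(0,1]$.
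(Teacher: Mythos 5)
Your route is correct in outline and is a more explicit, more rigorous rendering of what the paper does, rather than the paper's argument itself. The paper never constructs the steady-state map $\chi$: its proof works purely at the level of first-order perturbations around the learned-policy trajectory, writing $u_{mpc}(k)=\tilde{u}(k)+\delta u(k)$, $\hat{x}_{mpc}(k)=\hat{x}(k)+\delta\hat{x}(k)$, linearizing the nominal dynamics and cost, and imposing the steady-state condition $\delta\hat{x}(k+1)=\delta\hat{x}(k)$ to get $\delta\hat{x}=(I_n-\hat{f}_{\hat{x}})^{-1}\hat{f}_{u}\,\delta u$ and hence only the reduced sensitivity $\delta\tilde{l}(u)=l_{\hat{x}}(I_n-\hat{f}_{\hat{x}})^{-1}\hat{f}_{u}+l_{u}$, with ``$\tilde{h}(u)$ obtained by the same process.'' That is exactly (and only) the derivative information later consumed by the KKT adaptation in Theorem~\ref{Online Adaptive Control Policy}; your implicit-function-theorem elimination instead builds $\tilde{l}$ and $\tilde{h}$ as genuine functions of $u$, and your explicit disposal of $\gamma$ --- observing that at steady state the RCBF constraint in \eqref{Steady-State NMPC} collapses to $\gamma\bigl(h(\hat{x})+\eta(\varepsilon_w+\varepsilon_s+\varepsilon_c)\bigr)\le 0$, which is $\gamma$-free over $0<\gamma\le 1$ --- fills a step the paper glosses over entirely. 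Both arguments hinge on the same matrix $I_n-\hat{f}_{\hat{x}}$ being invertible.

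There is, however, one step in your proposal that fails: the claim that Condition~\ref{cond1} rules out $1$ as an eigenvalue of $\partial_{\hat{x}}\hat{f}$. Asymptotic stability of the closed loop $\hat{x}(k+1)=\hat{f}(\hat{x}(k),\tilde{u}(\hat{x}(k)))$ constrains the spectrum of the \emph{closed-loop} Jacobian $\hat{f}_{\hat{x}}+\hat{f}_{u}\,\partial\tilde{u}/\partial\hat{x}$, not of the open-loop Jacobian $\hat{f}_{\hat{x}}$ with $u$ frozen, and it is the open-loop Jacobian that enters your $\Gamma(\hat{x},u)=\hat{x}-\hat{f}(\hat{x},u)$. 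A scalar counterexample: $\hat{f}(\hat{x},u)=\hat{x}+u$ with policy $\tilde{u}=-\hat{x}/2$ gives the asymptotically stable closed loop $\hat{x}(k+1)=\hat{x}(k)/2$, yet $I-\hat{f}_{\hat{x}}=0$ is singular and the steady-state relation $\hat{x}=\hat{x}+u$ admits no solution map $\chi(u)$ for $u\neq 0$, so the reduction genuinely breaks down. (A secondary mismatch: Condition~\ref{cond1} is stated at $(x^{\prime}_e,u^{\prime}_e)$ under $\tilde{u}$, whereas you invoke it at $(x_e,u_e)$.) The nonsingularity of $I_n-\hat{f}_{\hat{x}}$ therefore has to be taken as a standing regularity assumption --- which is what the paper does silently by writing $(I_n-\hat{f}_{\hat{x}}(k))^{-1}$ without comment --- rather than derived from closed-loop stability; with that assumption put in place, the rest of your reduction, including the $\gamma$ elimination and the transfer of uniqueness from Corollary~\ref{lemm1}, goes through.
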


\begin{proof}
Let $\hat{x}(k)$ and $\tilde{u}(k)$ denote the nominal trajectory using the function-approxiamted control policy. Thus, the nominal trajectory for the RCBF-based NMPC policy \eqref{RCBF-based NMPC} is expressed as
\begin{equation}
  \begin{aligned}
    \label{Difference}
    & \hat{x}_{mpc}(k)=\hat{x}(k)+\delta \hat{x}(k),\\
    & u_{mpc}(k)=\tilde{u}(k)+\delta u(k),
  \end{aligned}
\end{equation}
where $\delta u(k)$ represents the difference between the NMPC and function-approximated control policy at each time step $k$, and $\delta \hat{x}(k)$ denotes the resulting change in the states. Using the nominal model \eqref{nominal model}, one has
\begin{equation}
  \begin{aligned}
    \label{Variation}
    & \delta \hat{x}(k+1)= \hat{f}_{\hat{x}} (k) \delta \hat{x}(k) + \hat{f}_{u} (k) \delta u(k),\\
    & \delta l(k)= l_{\hat{x}}(k) \delta \hat{x}(k) + l_{u} (k) \delta u(k),
  \end{aligned}
\end{equation}
where $\hat{f}_{\hat{x}}(k)$, $\hat{f}_{u}(k)$, $l_{\hat{x}}(k)$, and $l_{u}(k)$ are partial derivatives with respect to $\hat{x}$ and $u$, respectively, and  obtained using $\hat{x}(k)$ and $\tilde{u}(k)$.

\hspace{-5 mm} Now, it is clear that $\delta \hat{x}(k+1)=\delta \hat{x}(k)$ for the steady-state condition; therefore, one has
\begin{equation}
  \begin{aligned}
    \label{eq13}
    \delta \hat{x}(k)= (I_n-\hat{f}_{\hat{x}}(k))^{-1} \hat{f}_{u}(k) \hspace{0.75 mm} \delta u(k),
  \end{aligned}
\end{equation}
and
\begin{equation}
  \begin{aligned}
    \label{eq14}
    \delta l(k)= (l_{\hat{x}}(k) (I_n-\hat{f}_{\hat{x}}(k))^{-1} \hat{f}_{u}(k) + l_{u} (k)) \hspace{0.75 mm} \delta u(k).
  \end{aligned}
\end{equation}
Using \eqref{eq14}, one has $\delta \tilde{l}(u) = (l_{\hat{x}} (I_n-\hat{f}_{\hat{x}})^{-1} \hat{f}_{u} + l_{u})$ and can obtain $\tilde{l}(u)$. Similarly, $\tilde{h}(u)$ is obtained using the same process. This completes the proof.
\end{proof}

Now, consider $\tilde{x}=\hat{x}-x$ and define an auxiliary variable $s$ as
\begin{equation}
  \begin{aligned}
    \label{s}
    s(k)=\Gamma \tilde{x}(k),
  \end{aligned}
\end{equation}
where $\Gamma \in \mathbb{R}^{m \times n}$ is designed such that $\Gamma \hat{f}_u \in  \mathbb{R}^{m \times m}$ is a diagonal matrix. The following theorem presents the proposed online correction scheme, including an KKT adaptation and an ancillary feedback control.

\begin{theorem}[Online Adaptive Control Policy]
\label{Online Adaptive Control Policy}
Considering the real system \eqref{system}, the nominal model \eqref{nominal model}, Corollary \ref{Modified Steady-State Optimization Problem}, Lemma \ref{Modified Steady-State Optimization Problem}, and the auxiliary variable $s$ \eqref{s}, the online adaptation policy
\begin{equation}
  \begin{aligned}
    \label{Online Adaptation}
    & u(k)=\tilde{u}(k)+\delta u(k)+K(\hat{x}(k),x(k)),\\
    & \delta u(k) \approx \delta u(k-1)-K_0 \begin{bmatrix} \tilde{h}_a(u(k-1))\\ N^T \delta \tilde{l}(u(k-1)) \end{bmatrix},\\
    & K_0=\begin{bmatrix} \delta \tilde{h}_a(u(k-1))\\ N^T \delta^2 \tilde{l}(u(k-1)) \end{bmatrix}^{-1},\\
    & K(\hat{x}(k),x(k))=\frac {1}{\Gamma \hat{f}_{u}(\hat{x}(k),\tilde{u}_{mpc}(k))} (\Upsilon s(k)+\Gamma \varepsilon_w+\Gamma \varepsilon_s).
  \end{aligned}
\end{equation}
with the design matrix $\Upsilon \in \mathbb{R}^{m \times m}$ $(0 \leq \Upsilon_{ii}<1, \hspace{1 mm} i=1,...,m)$ and $\tilde{u}_{mpc}(k) = \tilde{u}(k)+\delta u(k)$ minimizes the performance loss due to the control learning error, the system identification error, and the unknown disturbance.
\end{theorem}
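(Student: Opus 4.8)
The plan is to treat the two ingredients of \eqref{Online Adaptation} separately, since they address disjoint error sources: the KKT update of $\delta u$ removes the \emph{steady-state} control-learning error by driving the corrected policy $\tilde u_{mpc}=\tilde u+\delta u$ onto the true NMPC optimizer, while the ancillary feedback $K(\hat x,x)$ forces the real state to track the nominal state despite $e_s$ and $w$. Establishing both claims, together with Corollary \ref{lemm1} (the nominal steady state is the performance-optimal equilibrium) and Lemma \ref{Modified Steady-State Optimization Problem} (its implicit reduced form), then yields the asserted mitigation of the performance loss.

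\textbf{KKT adaptation.} First I would show that the $\delta u$ recursion is exactly a Newton iteration for the first-order optimality system of the reduced steady-state problem \eqref{Modified Steady-State NMPC}. Writing the active-set KKT residual as
\begin{equation}
  \begin{aligned}
    R(u) = \begin{bmatrix} \tilde{h}_a(u)\\ N^T \delta \tilde{l}(u) \end{bmatrix},
  \end{aligned}
\end{equation}
where $\tilde h_a$ collects the active constraints of \eqref{Modified Steady-State NMPC} and $N$ spans the null space of their gradients, the optimizer $u_e$ is characterized by $R(u_e)=0$ (primal feasibility of the active constraints plus a vanishing reduced cost). Its Jacobian is the block matrix $J(u)=\begin{bmatrix}\delta\tilde h_a(u)\\ N^T\delta^2\tilde l(u)\end{bmatrix}$, which is precisely $K_0^{-1}$ in \eqref{Online Adaptation}. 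Hence the update $\delta u(k)=\delta u(k-1)-K_0\,R(u(k-1))$ is the Newton step $u\mapsto u-J^{-1}R$, and under a standard regularity condition (LICQ on the active set together with a positive-definite reduced Hessian $N^T\delta^2\tilde l\,N$, so that $J$ is invertible) it converges locally to $u_e$. Consequently $\tilde u_{mpc}\to u_{mpc}$ at steady state, i.e. the control-learning contribution $e_c$ is asymptotically cancelled.

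\textbf{Ancillary feedback.} Next I would derive the closed-loop dynamics of $s=\Gamma\tilde x$ with $\tilde x=\hat x-x$. Driving the nominal model \eqref{nominal model} by $\tilde u_{mpc}$ and applying $u=\tilde u_{mpc}+K$ to the disturbed plant \eqref{disturbed system}, then using the definition of $e_s$ in \eqref{identification error} and a first-order expansion of $f$ in the control around $\tilde u_{mpc}$ (so that $f_u\approx\hat f_u$ when the identification is accurate), the one-step error obeys
\begin{equation}
  \begin{aligned}
    \tilde x(k+1) \approx e_s(k) - \hat f_u(\hat x(k),\tilde u_{mpc}(k))\,K(\hat x(k),x(k)) - w(k).
  \end{aligned}
\end{equation}
Multiplying by $\Gamma$ and exploiting that $\Gamma$ is designed so that $\Gamma\hat f_u$ is diagonal (hence invertible), substitution of $K$ from \eqref{Online Adaptation} gives
\begin{equation}
  \begin{aligned}
    s(k+1) \approx -\Upsilon\, s(k) + \Gamma\big(e_s(k)-\varepsilon_s\big) - \Gamma\big(w(k)+\varepsilon_w\big).
  \end{aligned}
\end{equation}
Since $0\le\Upsilon_{ii}<1$, the map $s\mapsto-\Upsilon s$ is a contraction, and by Assumption \ref{Bounded Terms} the remaining terms are bounded; a comparison argument (or a discrete Lyapunov function $V=s^Ts$) then shows that $s$, and therefore $\tilde x$, is uniformly ultimately bounded and enters an arbitrarily small neighborhood of the origin as $\varepsilon_s,\varepsilon_w\to 0$. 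Thus the real trajectory stays around the nominal one, mitigating the effect of $e_s$ and $w$.

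\textbf{Main obstacle.} I expect the delicate points to be: (i) justifying the linearizations, since the Newton interpretation of the $\delta u$ update rests on the implicit reduced forms $\tilde l(u),\tilde h(u)$ of Lemma \ref{Modified Steady-State Optimization Problem}, whose sensitivities \eqref{eq13}--\eqref{eq14} are themselves first-order steady-state approximations, while the feedback analysis discards higher-order terms of $f$; and (ii) guaranteeing invertibility of both $K_0$ (KKT regularity) and $\Gamma\hat f_u$ (the very reason for the diagonalizing design of $\Gamma$). A secondary subtlety is that the stated conclusion---``minimizes the performance loss''---is qualitative, so the rigorous content of the theorem is really the pair of convergence/UUB statements above combined with the optimality of the nominal equilibrium from Corollary \ref{lemm1}; accordingly I would phrase the result as a UUB guarantee whose residual bound vanishes with $\varepsilon_s,\varepsilon_w$, rather than as exact optimality.
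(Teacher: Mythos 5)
Your proposal takes essentially the same route as the paper's proof: the identical two-part decomposition into a KKT-residual correction $\delta u$ for the steady-state control-learning error (the paper writes the KKT system \eqref{KKT}--\eqref{KKT 2} and asserts that $\delta u$ removes the deviation from it, while you merely make the Newton-iteration interpretation and its regularity requirements, i.e.\ invertibility of $K_0$, explicit) and the same error-dynamics computation \eqref{x tilde}--\eqref{s 2} for the ancillary feedback $K(\hat{x},x)$, followed by a stability argument for $s=\Gamma\tilde{x}$. The only substantive difference is the final step: the paper bounds $s(k+1)\leq-\Upsilon s(k)$ and claims strict Lyapunov decrease with $P=I_m-\Upsilon^{T}\Upsilon$, whereas your contraction-plus-bounded-perturbation argument concludes uniform ultimate boundedness with a residual that vanishes as $\varepsilon_s,\varepsilon_w\to 0$ --- which is, if anything, the more defensible reading, since the sign-indefinite terms $\Gamma e_s(k)-\Gamma\varepsilon_s$ and $-\Gamma w(k)-\Gamma\varepsilon_w$ do not admit the componentwise domination the paper's inequality chain suggests.
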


\begin{proof}
The first part of the proof focuses on minimizing the performance loss for the nominal model \eqref{nominal model} due to the control learning error. Considering Corollary \ref{Modified Steady-State Optimization Problem} and Lemma \ref{Modified Steady-State Optimization Problem}, one can conclude that $(x_e,u_e)$ is an asymptotically stable equilibrium point of the nominal model under the RCBF-based NMPC policy. Hence, $u_{mpc}$ satisfies the KKT conditions of \eqref{Modified Steady-State NMPC}, which are expressed as
\begin{equation}
  \begin{aligned}
    \label{KKT}
    & \delta \tilde{l}(u) + \delta \tilde{h}_a(u)^T \lambda=0,\\
    & \lambda^T \tilde{h}(u)=0,
  \end{aligned}
\end{equation}
where $\tilde{h}_a(u)$ denotes the active constraints, and $\lambda\geq 0$ is the Lagrange multiplier. Now, one can rewrite \eqref{KKT} as
\begin{equation}
  \begin{aligned}
    \label{KKT 2}
    & \mathcal{A}^T \delta \tilde{l}(u)=0,\\
    & \tilde{h}_a(u)=0,
  \end{aligned}
\end{equation}
where $\mathcal{A}$ lies in the null space of the active constraint variation, i.e., $\mathcal{A}^T \delta \tilde{h}_a(u)^T=0$.

\hspace{-5 mm} Due to the control learning error $e_{c}$, one may have $(x^{\prime}_e,u^{\prime}_e) \neq (x_e,u_e)$; therefore, the goal is that the function-approximated control policy $\tilde{u}(k)$ is adapted online such that it guarantees $(x^{\prime}_e,u^{\prime}_e)=(x_e,u_e)$. Using Lemma \ref{lemm1}, it is clear that $(x^{\prime}_e,u^{\prime}_e)$ does not satisfy the KKT conditions \eqref{KKT 2} if $(x^{\prime}_e,u^{\prime}_e) \neq (x_e,u_e)$. Thus, the deviation from the KKT condition \eqref{KKT 2} indicates asymptotic performance loss stemming from the RCBF-based NMPC policy approximation. Consequently, $\delta u$ minimizes the asymptotic performance loss due to $e_{c}$, and the gain $K_0$ is chosen such that it does not significantly affect the dynamics, but adjusts the asymptotic performance. Therefore, the first part of the proof is completed.

\hspace{-5 mm} Now, we have a reasonable performance for the nominal model under the function-approximated control policy $\tilde{u}(k)$; however, the performance loss still exists for the real system \eqref{system} due to the system identification error and the unknown disturbance. Considering the nominal model \eqref{nominal model} under $\tilde{u}_{mpc}$, one has
\begin{equation}
  \begin{aligned}
    \label{x tilde}
    &\tilde{x}(k+1)=\hat{x}(k+1)-x(k+1)\\
    &=\hat{f}(\hat{x}(k),\tilde{u}_{mpc}(k))-f(x(k),u(k))-w(k)\\
    &=\hat{f}(\hat{x}(k),\tilde{u}_{mpc}(k))-\hat{f}(\hat{x}(k),u(k))+e_{s}(k)-w(k)\\
    &= \hat{f}_{u}(\hat{x}(k),\tilde{u}_{mpc}(k)) \hspace{0.7 mm} (\tilde{u}_{mpc}(k)-u(k)) + e_{s}(k) - w(k)\\
    &=- \hat{f}_{u}(\hat{x}(k),\tilde{u}_{mpc}(k)) \hspace{0.7 mm} K(\hat{x}(k),x(k)) + e_{s}(k) - w(k).
  \end{aligned}
\end{equation}
Now, the Lyapunov function candidate is considered as
\begin{equation}
  \begin{aligned}
    \label{Lyapunov}
    V(k)=s(k)^T s(k),
  \end{aligned}
\end{equation}
where $V(k)$ is a positive definite function. The derivative of the Lyapunov candidate \eqref{Lyapunov} in the discrete-time domain is obtained as
\begin{equation}
  \begin{aligned}
    \label{Lyapunov derivative}
    V(k+1)-V(k)=s(k+1)^T s(k+1)-s(k)^T s(k).
  \end{aligned}
\end{equation}
Using \eqref{s}-\eqref{x tilde}, one has
\begin{equation}
  \begin{aligned}
    \label{s 2}
    &s(k+1)=\Gamma \tilde{x}(k+1)\\
    &=-\Upsilon s(k)-\Gamma \varepsilon_w-\Gamma \varepsilon_s+\Gamma e_f(k) -\Gamma w(k)\\
    &\leq -\Upsilon s(k)-\Gamma \varepsilon_w-\Gamma \varepsilon_s+\Gamma \|e_f(k)\| +\Gamma \|w(k)\|\\
    &\leq -\Upsilon s(k).
  \end{aligned}
\end{equation}
Using \eqref{Lyapunov derivative} and \eqref{s 2}, considering $P=I_m - \Upsilon^T \Upsilon$, one has
\begin{equation}
  \begin{aligned}
    \label{Lyapunov derivative 2}
    &V(k+1)-V(k) \leq -s(k)^T P s(k)\\
    &\hspace{25.4 mm} \leq -\lambda_{min}(P)V(k)\\
    &\hspace{25.4 mm} <0.
  \end{aligned}
\end{equation}
Consequently, one can conclude that $\delta u(k)$ minimizes the difference between $u_{mpc}$ and $\tilde{u}$, and $K(\hat{x},x)$ keeps the actual states $x$ of the real system around the nominal trajectory $\hat{x}$ under $\tilde{u}_{mpc}$; therefore, $K(\hat{x},x)$ minimizes the performance loss due to system identification error $e_{s}$ and the unkknown disturbance $w$. This completes the proof.
\end{proof}

\begin{remark} [KKT Condition]
\label{KKT Condition}
To develop the proposed adaptation law \eqref{Online Adaptation}, Condition 2 must be satisfied for the nominal model, which means that its closed-loop performance under the function-approximated control policy $\tilde{u}(k)$ converges to an equilibrium point $(x^{\prime}_e,u^{\prime}_e)$, but it may not be the desired equilibrium point $(x_e,u_e)$ from the RCBF-based NMPC policy $u_{mpc}(k)$ because of the policy approximation error. When training $\tilde{u}(k)$ using the RCBF-based NMPC policy $u_{mpc}(k)$, one wants the equilibrium solution to converge to the same limit point $(x_e,u_e)$, where the KKT conditions hold. Consequently, we adapt $\tilde{u}(k)$ online to ensure that $(x^{\prime}_e,u^{\prime}_e)$ satisfies the KKT conditions and make $(x^{\prime}_e,u^{\prime}_e) = (x_e,u_e)$.
\end{remark}

\begin{remark} [Offline Probabilistic Verification]
\label{Offline Probabilistic Verification}
To guarantee that Condition \ref{cond1} is satisfied, using the offline data-driven safe predictive control, the nominal model is simulated for $N_v$ randomly selected initial states in the training data range as
\begin{equation}
  \begin{aligned}
    \label{eq20}
    N_v \geq \frac{\log \frac {1}{\kappa }}{\log \frac {1}{1-\epsilon}},
  \end{aligned}
\end{equation}
where $\epsilon \in (0,1)$ and $\kappa  \in (0,1)$ denote the accuracy and confidence of the offline probabilistic verification. If all $N_v$ closed-loop trajectories are stable ($E=0$), one can conclude that the nominal model under the function-approximated control policy converges to $(x^{\prime}_e,u^{\prime}_e)$ for all initial states with the probability \cite{krishnamoorthy2021adaptive}
\begin{equation}
  \begin{aligned}
    \label{eq21}
    Pr\{Pr\{E=0\} \geq 1-\epsilon\} \geq 1-\kappa.
  \end{aligned}
\end{equation}
However, if any closed-loop trajectory from $N_v$ samples is unstable, the control learning procedure must be repeated.
\end{remark}

\begin{remark} [Safety and Performance]
\label{Control Learning Error}
Using Theorem \ref{Online Adaptive Control Policy}, it is clear that the online adaptive control policy \eqref{Online Adaptation} keeps the actual states $x$ around the nominal trajectory $\hat{x}$ under $\tilde{u}_{mpc}$. Therefore, using the fourth line in \eqref{Mean Value}, one can make sure that the online adaptive control policy improves the safety for the real system. Moreover, since the performance loss caused by $e_{c}$, $e_{s}$, and $w$ is mitigated, the online adaptive control policy \eqref{Online Adaptation} has better performance than the function-approximated control policy $\tilde{u}(k)$ and the RCBF-based NMPC policy \eqref{RCBF-based NMPC} for online controls.
\end{remark}

\begin{figure}[!ht]
     \centering
     \includegraphics[width=0.99\linewidth]{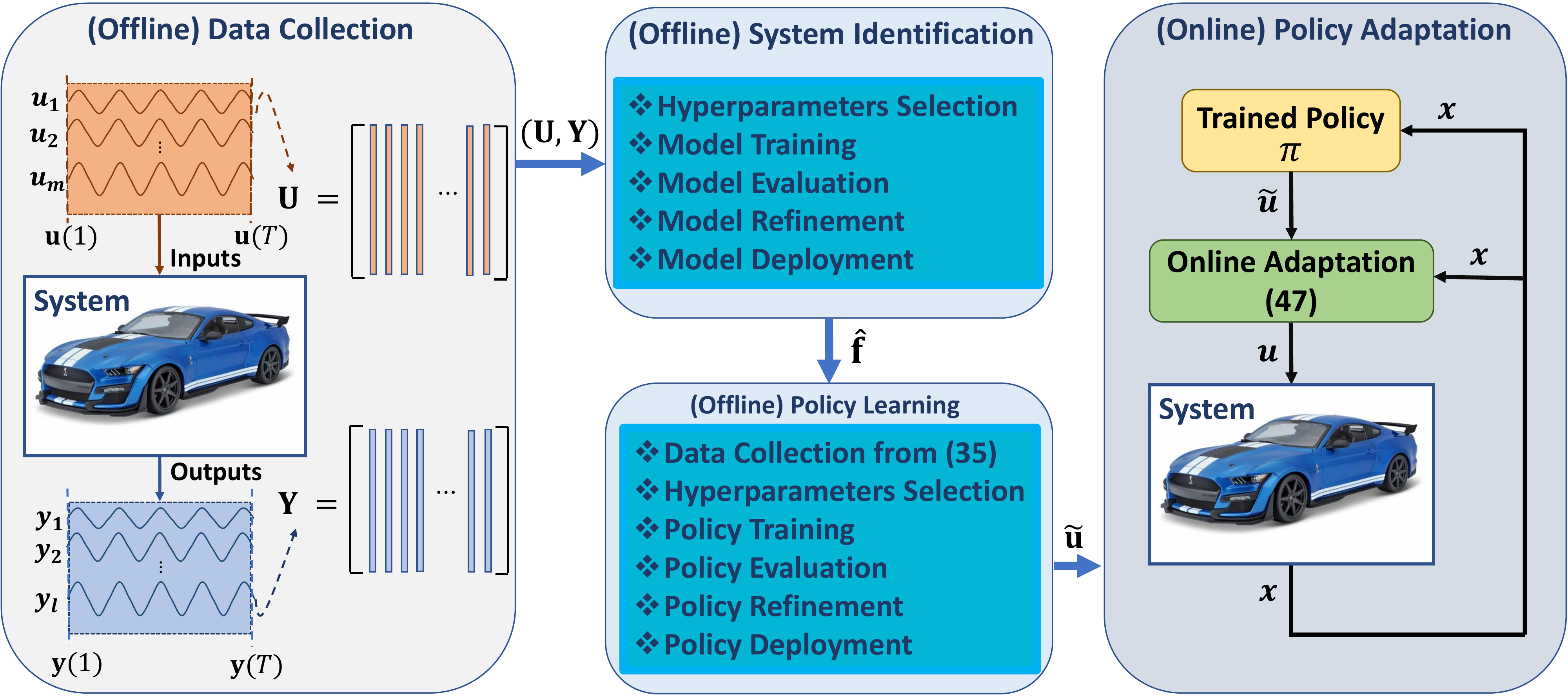}
     \caption{Sequence of Steps of online data-driven safe predictive control.}
     \label{Flowchart}
\end{figure}

\begin{figure}[!ht]
     \centering
     \includegraphics[width=0.99\linewidth]{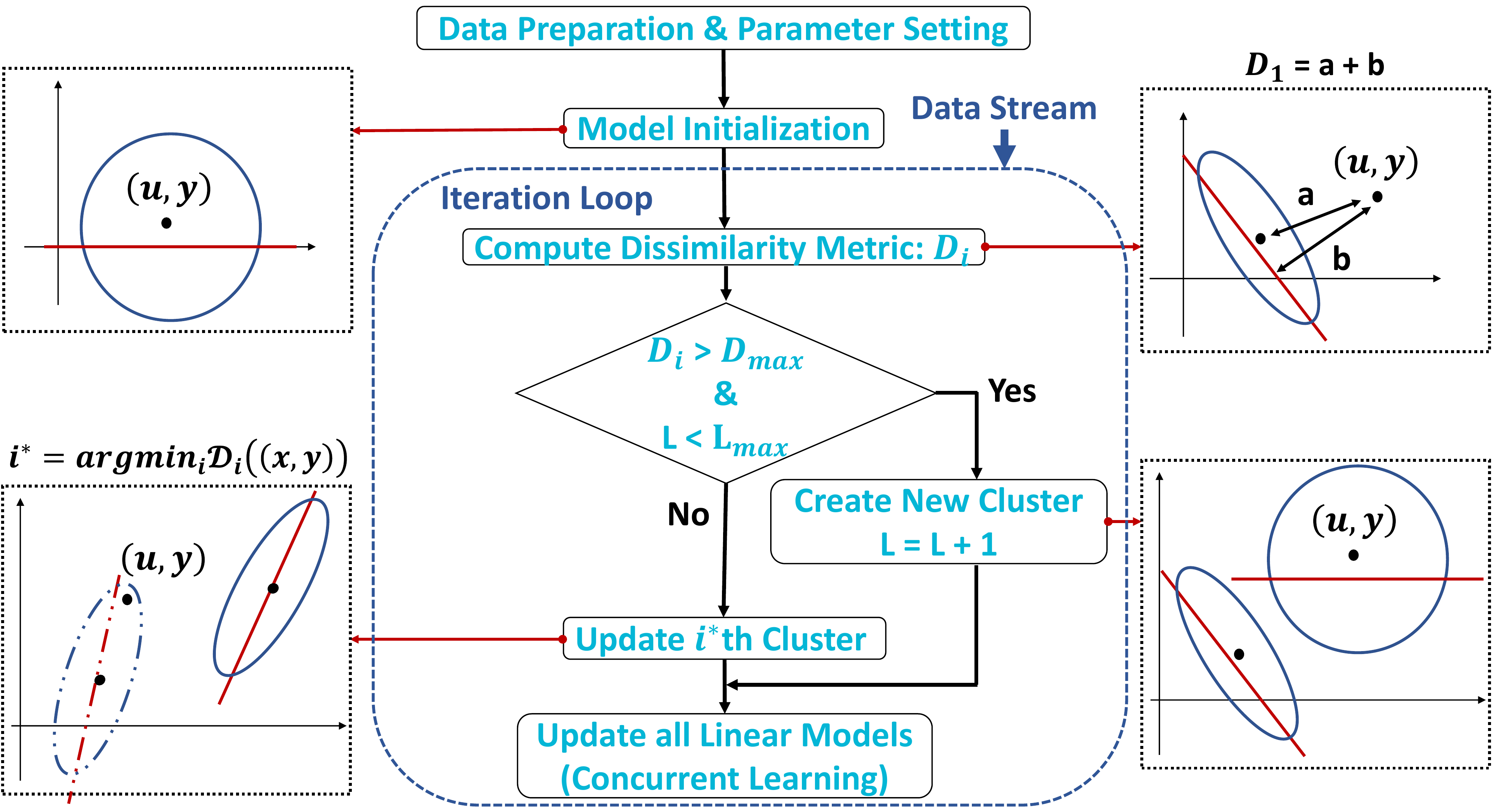}
     \caption{Sequence of Steps of discrete-time STF-based Concurrent Learning: i) First cluster-model is initialized for the first I/O data, ii) For next data, the dissimilarity metric is computed to either create a new cluster-model or update one of the existing ones, iii) For each iteration, all linear models are updated using the CL, and iv) The process is repeated until all training I/O data is used. }
     \label{STF Flowchart}
\end{figure}

\begin{remark} [Computational Cost]
\label{Computational Cost}
The online data-driven safe predictive control is much less computationally costly than the RCBF-based NMPC problem \eqref{RCBF-based NMPC} since the proposed control policy learns the NMPC policy while keeping the real system safe; however, it does not need to solve an optimization problem at each time step $k$ (only algebraic computations are needed). In comparison with our previous work \cite{vahidi2022data}, the RCBF is extended to guarantee system safety in the presence of not only the system identification error and the external disturbance but also the control learning error. Therefore, we have removed the QP safety filter from the algorithm. Moreover, the proposed online adaptive control policy enables us to minimize the performance loss; thus, we do not need to consider a switching criteria for returning the NMPC. These two contributions effectively reduce the computational cost for the proposed algorithm.
\end{remark}

The procedure for determining the online data-driven safe predictive control is summarized in Fig. \ref{Flowchart} and Algorithm \ref{STF}. Fig. \ref{Flowchart} shows a flowchart that represents the sequence of the steps of the proposed data-driven control framework, and Algorithm 1 describes each step in the flowchart. Moreover, Fig. \ref{STF Flowchart} presents a flowchart that describes the step of model training using the discrete-time STF-based concurrent learning, and the readers are referred to see Algorithms 1 and 2 in \cite{chen2020online} for more details about the sequence of the steps of the STF framework. It is worth noting that we have modified the Algorithm 2 in \cite{chen2020online} such that we use the concurrent learning law instead of the recursive least squares (RLS) law to remove the PE requirement.

\begin{algorithm}[ht]
    \caption{Online data-driven safe predictive control}
    \label{STF}
    \textbf{I. Offline System Identification}: \\
    1- \textbf{Data Collection:} Collect the raw input/output data $(u/y)$ from the real system. \\
    2- \textbf{Data Processing:} Process the data to prepare it for model learning, which includes data normalization and data partitioning to training data and validation data. \\
    3- \textbf{Hyperparameters Selection:} Determine the appropriate STFs' hyperparameters to use for model learning, which can be done through trial and error, feature selection, or genetic algorithm (GA). \\
    4- \textbf{Model Training:} Train a model on the processed data using the discrete-time STF-based concurrent learning on the training data to identify the real system and obtain the identified model \eqref{nominal model}. \\
    5- \textbf{Model Evaluation:} Evaluate the performance of the trained model using the best fitting rate, which analyzes the goodness of fit between the measured output and the simulated output of the trained model on the validation data. \\
    6- \textbf{Model Refinement:} If the performance of the trained model is not satisfactory, refine the model by adjusting the hyperparameters. \\
    7- \textbf{Model Deployment:} Once the model is trained and validated, it can be deployed for real-world use. This may involve retraining the trained model with new data, which may need updating STFs' hyperparameters. \\
    \textbf{II. Offline NMPC Learning}: Using the identified model \eqref{nominal model}, we collect raw data from the RCBF-based NMPC \eqref{RCBF-based NMPC} such that $x,r$ and $u$ are considered as the input and the output of the control policy. Doing same procedure as the step I (i.e., offline system identification), we use the discrete-time STF-based concurrent learning as a function approximator to learn the RCBF-based NMPC policy. Once the policy is trained and validated, it can be deployed for real-world use. This may involve retraining the trained policy with new data, which may need updating STFs' hyperparameters. \\ 
    \textbf{III. Online Adaptation}: The  adaptation law \eqref{Online Adaptation} is used online to minimize the performance loss caused by the control learning error, the system identification error, and the unknown disturbance.
\end{algorithm}

\section{Simulation Results}
\label{Sec4}
In this section, two simulation examples are considered to demonstrate the effectiveness of the proposed online data-driven safe predictive control. The simulation results are presented for a cart-inverted pendulum and a gasoline engine controls. The cart-inverted pendulum has a known model, where we use it for NMPC to compare its result with the proposed online data-driven safe predictive control. However, there is not any known model for the gasoline engine vehicle; thus, we use the collected data from the system to identify a nominal model for the NMPC and compare its result with the proposed online data-driven safe predictive control.

\begin{figure}[!ht]
     \centering
     \includegraphics[width=0.59\linewidth]{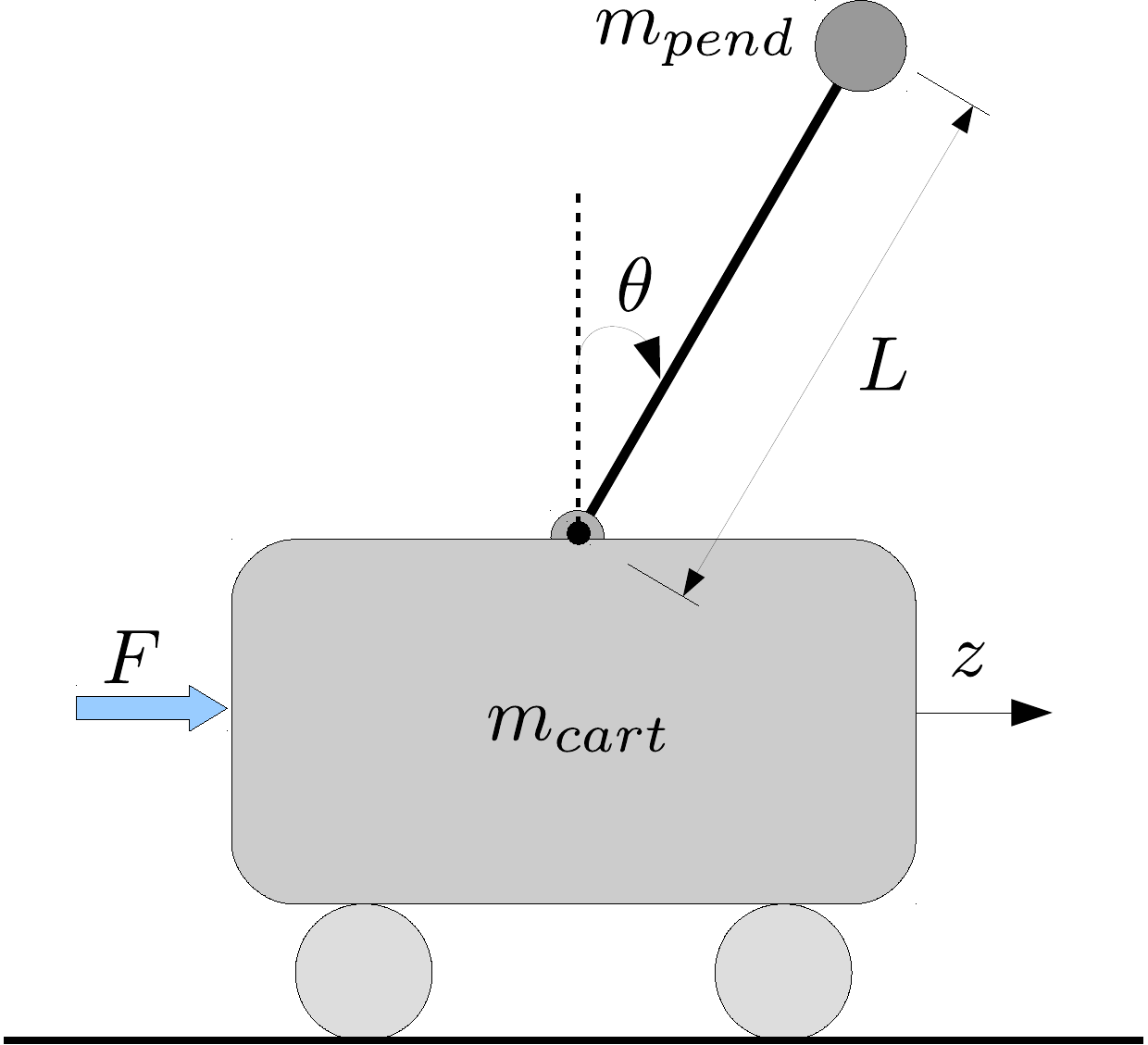}
    \caption{Cart-inverted pendulum.}
     \label{Cart-inverted pendulum fig}
 \end{figure}

\subsection{Cart-Inverted Pendulum}
As shown in Fig. \ref{Cart-inverted pendulum fig}, an inverted pendulum mounted to a cart is modeled in continuous-time domain as
\begin{equation}
  \begin{aligned}
    \label{Cart-inverted pendulum}
    &\ddot{z}=\frac{F-{K}_{d}\dot{z}-{m}_{pend}(L{\dot{\theta }}^{2} \sin (\theta )-g\sin (\theta ) \cos (\theta ))}{{m}_{cart}+{m}_{pend}{\sin }^{2}(\theta )},\\ 
    &\ddot{\theta }=\frac{\ddot{z}\cos (\theta )+g\sin (\theta )}{L},
  \end{aligned}
\end{equation}
where $z$ and $\theta$ are the cart position and the pendulum angle, respectively. ${m}_{cart}=5kg$, ${m}_{pend}=1kg$, $L=2m$, ${K}_{d}=10Ns/m$, and $g=9.81m/s^{2}$ present the cart mass, the pendulum mass, the length of the pendulum, the damping parameter, and the gravity acceleration, respectively. The system is controlled by a variable force $F$, and the model \eqref{Cart-inverted pendulum} is discretized with a sampling time $T=0.1s$.

Now, the states and the outputs of the model \eqref{Cart-inverted pendulum}, the safety constraint, and the input constraint are considered as
\begin{equation}
  \begin{aligned}
    \label{pendulum's states}
    	x={[{x}_{1},{x}_{2},{x}_{3},{x}_{4}]}^{T}={[z,\dot{z},\theta,\dot{\theta }]}^{T},
  \end{aligned}
\end{equation}
\begin{equation}
  \begin{aligned}
    \label{pendulum's output}
    	y={[{x}_{1},{x}_{3}]}^{T}={[z,\theta]}^{T},
  \end{aligned}
\end{equation}
\begin{equation}
  \begin{aligned}
    \label{pendulum's safety}
    	-5\le z\le 5,
  \end{aligned}
\end{equation}
\begin{equation}
  \begin{aligned}
    \label{pendulum's constraints}
    	-100\le F\le 100.
  \end{aligned}
\end{equation}

Using \eqref{Safe set}, \eqref{Mean Value}, and \eqref{pendulum's safety}, the CBF and the RCBF are considered as
\begin{equation}
  \begin{aligned}
    \label{pendulum's CBF}
    	h(x)=\left\| x \right\|-0.5,
  \end{aligned}
\end{equation}
\begin{equation}
  \begin{aligned}
    \label{pendulum's RCBF}
    	{h}_{r}(\hat{x})=\left\| \hat{x} \right\|-0.5+\eta (\varepsilon_w + \varepsilon_s + \varepsilon_c),
  \end{aligned}
\end{equation}
where $\eta=1$, $\varepsilon_w=0.01$, $\varepsilon_s=0.01$, and $\varepsilon_c=0.01$ are considered in \eqref{pendulum's RCBF}. Using \eqref{pendulum's states} and \eqref{pendulum's safety}, it is clear that the RCBF is only considered for $x_1$.

According to Algorithm \ref{STF}, the online data-driven safe predictive control is applied to the discrete-time version of the cart-inverted pendulum \eqref{Cart-inverted pendulum}, which yields 1) 97.54\% accuracy for 30000 training data with 10 clusters and local linear models in the offline system identification, and 2) 98.66\% accuracy for 100000 training data with 20 clusters and local linear models in the offline RCBF-based NMPC learning.

For the proposed online data-driven safe predictive control, the nonlinear system \eqref{Cart-inverted pendulum} is considered in the presence of disturbance ${w}(k)=-0.01+0.02\times rand(4,1)$. Fig. \ref{Cart-inverted pendulum Control input} depicts the control input signal $F$ using the RCBF-based NMPC, the offline data-driven safe predictive control, and the online data-driven safe predictive control with four future state predictions $N=4$ for the cart-inverted pendulum. For this example, it should be mentioned that the RCBF-based NMPC policy \eqref{RCBF-based NMPC} works with the well-known model \eqref{Cart-inverted pendulum}; however, we adapt it using the feedback controller in \eqref{Online Adaptation} to minimize the performance loss caused by the considered external disturbance ${w}(k)$. Moreover, one can see that the proposed online adaptive control policy corrects the offline data-driven safe predictive control and approximates the RCBF-based NMPC policy better. Fig. \ref{Cart-inverted pendulum Outputs} shows that the outputs of the cart-inverted pendulum \eqref{Cart-inverted pendulum} converge to zero using three control schemes whereas one can see the performance loss if only the offline data-driven safe predictive control is used. From Figs. \ref{Cart-inverted pendulum Control input} and \ref{Cart-inverted pendulum Outputs}, one can see that Condition 2 is satisfied such that the system converges to an equilibrium point $(x^{\prime}_e,u^{\prime}_e)$ using the offline data-driven safe predictive control; however, it is not same as the equilibrium point $(x_e,u_e)$ obtained by the RCBF-based NMPC policy. Therefore, the role of the online adaptive control policy is clearly demonstrated so that it makes same equilibrium point $(x_e,u_e)$ for the real system. Furthermore, Figs. \ref{Cart-inverted pendulum Outputs} shows that the system safety is achieved using three presented control schemes; however, one can see that the online adaptive control policy improves the system safety in comparison with the offline data-driven safe predictive control.

\begin{figure}[!ht]
     \centering
     \includegraphics[width=0.99\linewidth]{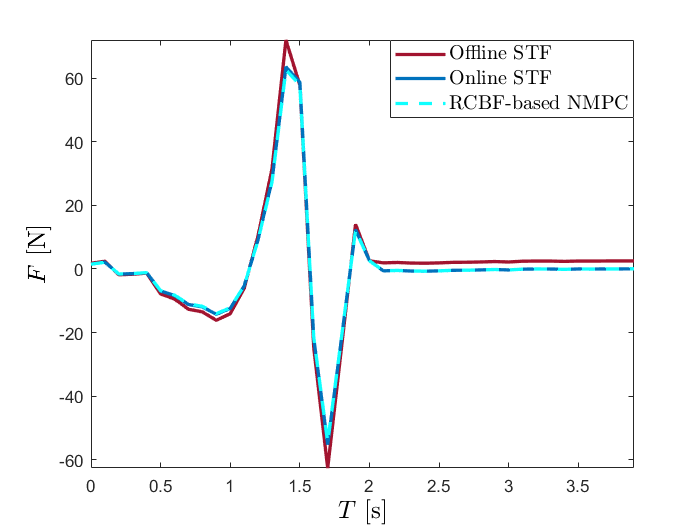}
    \caption{Control input for cart-inverted pendulum using RCBF-based NMPC and data-driven safe predictive control.}
     \label{Cart-inverted pendulum Control input}
 \end{figure}
 
 
 \begin{figure}[!ht]
     \centering
     \includegraphics[width=0.99\linewidth]{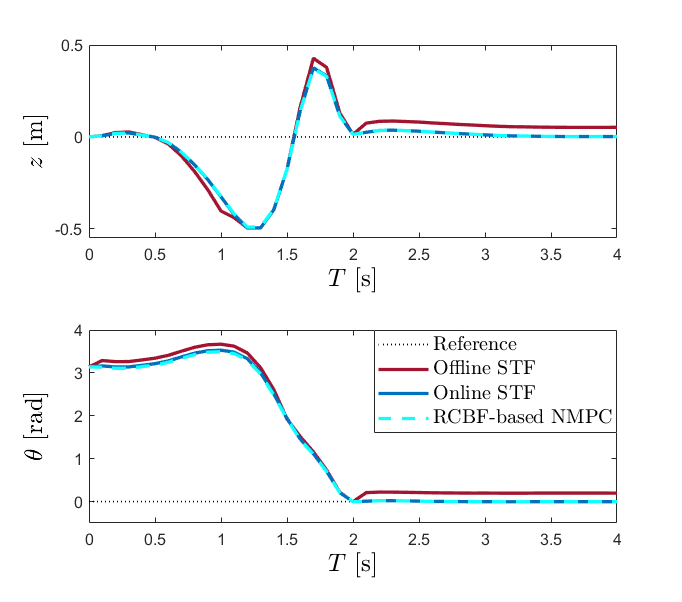}
    \caption{Outputs of cart-inverted pendulum using RCBF-based NMPC and data-driven safe predictive control.}
     \label{Cart-inverted pendulum Outputs}
 \end{figure}
 
  \begin{figure}[!ht]
     \centering
     \includegraphics[width=0.95\linewidth]{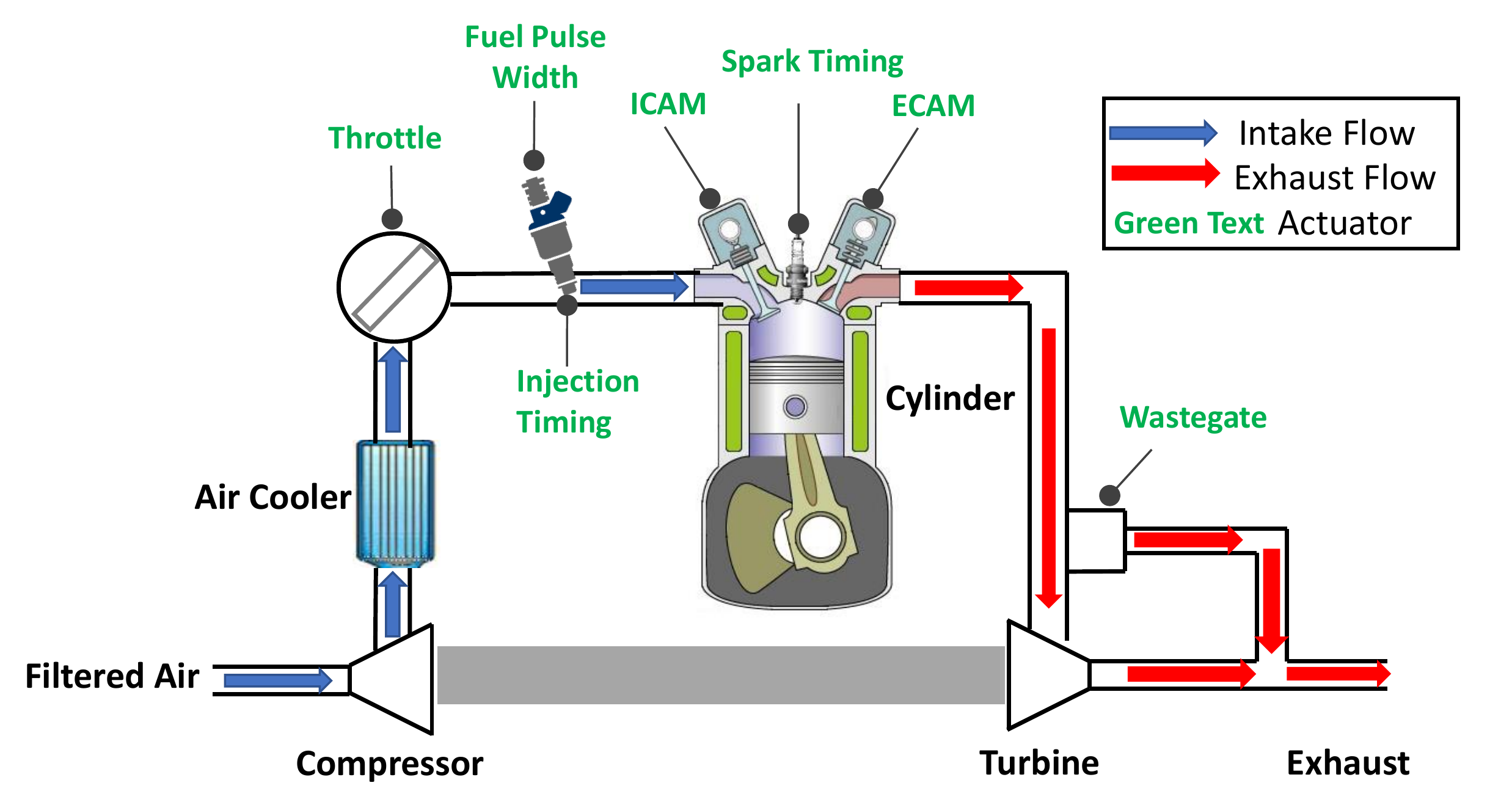}
    \caption{Turbocharged internal combustion engine.}
     \label{Turbocharged internal combustion engine}
 \end{figure}

\subsection{Turbocharged Internal Combustion Engine}
In this subsection, we apply our control framework on the turbocharged internal combustion engine as shown in Fig. \ref{Turbocharged internal combustion engine} \cite{chen2020online}. There are seven main actuators: throttle position, intake cam (ICAM) position, exhaust cam (ECAM) position, spark timing, fuel pulsewidth, fuel injection timing, and wastegate position. 
However, we use the first four actuators for STF-based nonlinear system identification as they play most significant roles on the gasoline engine controls. Moreover, the system outputs are considered as the fuel consumption rate and the vehicle speed. Apparently, the input–output relationship of this engine system is nonlinear and complex, which makes the system identification a challenging task. For this case, we aim to minimize the fuel consumption rate and track a reference for the vehicle speed.

According to Algorithm \ref{STF}, the online data-driven safe predictive control is applied to the turbocharged internal combustion engine based on data collected from a high-fidelity simulator provided by Ford, which yields 1) 96.12\% accuracy for 30000 training data with 20 clusters and local linear models in the offline system identification, and 2) 98.05\% accuracy for 100000 training data with 20 clusters and local linear models in the offline RCBF-based NMPC learning.

For this case, the RCBF-based NMPC policy \eqref{RCBF-based NMPC} works with the trained model obtained by the STF, and we adapt it using the feedback controller in \eqref{Online Adaptation} to minimize the performance loss caused by the the system identification error and the unknown disturbance for the real system. Considering four future state predictions $N=4$, Fig. \ref{Turbocharged internal combustion engine Control Inputs} shows the control input signals for the turbocharged internal combustion engine, i.e., the throttle position, the intake cam (ICAM) position, the exhaust cam (ECAM) position, and the spark timing. Like the previous example, one can see that the proposed online adaptive control policy corrects the offline data-driven safe predictive control and learns the RCBF-based NMPC policy better. Fig. \ref{Turbocharged internal combustion engine Outputs} shows the outputs of the turbocharged internal combustion engine, where the fuel consumption is minimized, and the vehicle speed tracks the desired reference while it satisfies the constraint. One can see that the offline data-driven safe predictive control causes a performance loss for the real system; however, the online data-driven safe predictive control minimizes the performance loss by removing the KKT deviations caused by the control learning error and state perturbations caused by the system identification error and unknown disturbance. Fig.~\ref{Number of clusters and local models} shows the distribution of the identification performance along the number of clusters and local models for the turbocharged internal combustion engine. As it is obvious from Fig. \ref{Number of clusters and local models}, there is no major change for the performance after 20 clusters; thus, we have considered this number for the engine vehicle identification. Moreover, Fig. \ref{Delay} illustrates the distribution of the identification performance along the input delay $d_u$ and output delay $d_y$ for the turbocharged internal combustion engine. One can see that $d_u=3,d_y=2$ makes the best identification performance for the engine vehicle, where these values are considered for the engine vehicle identification.

\begin{figure}[!ht]
     \centering
     \includegraphics[width=0.99\linewidth]{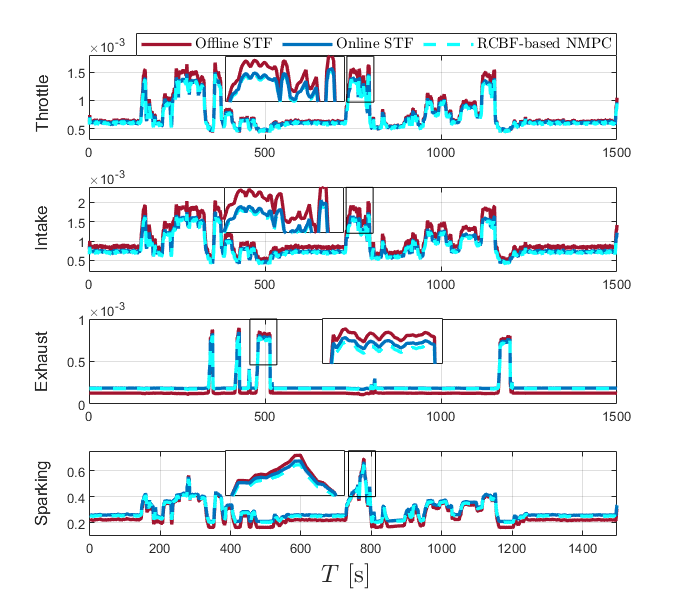}
    \caption{Control inputs for engine vehicle using RCBF-based NMPC and data-driven safe predictive control.}
     \label{Turbocharged internal combustion engine Control Inputs}
 \end{figure}

 \begin{figure}[!ht]
     \centering
     \includegraphics[width=0.99\linewidth]{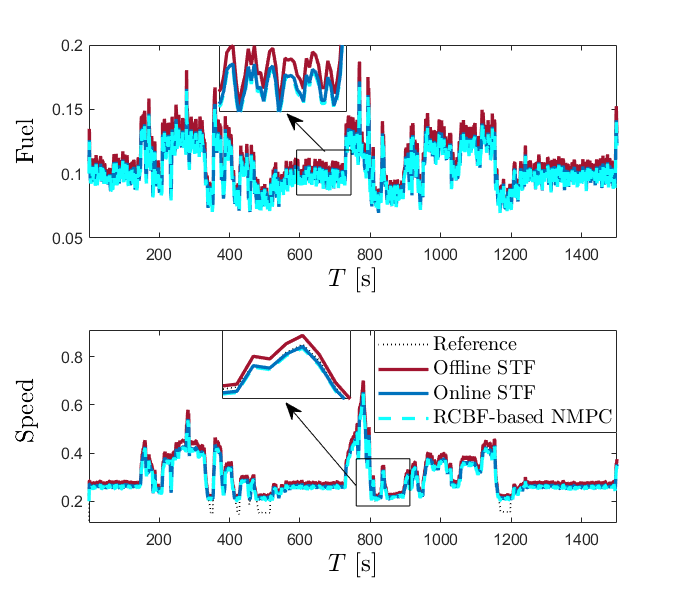}
    \caption{States for engine vehicle using RCBF-based NMPC and data-driven safe predictive control.}
     \label{Turbocharged internal combustion engine Outputs}
 \end{figure}
 
  \begin{figure}[!ht]
     \centering
     \includegraphics[width=0.99\linewidth]{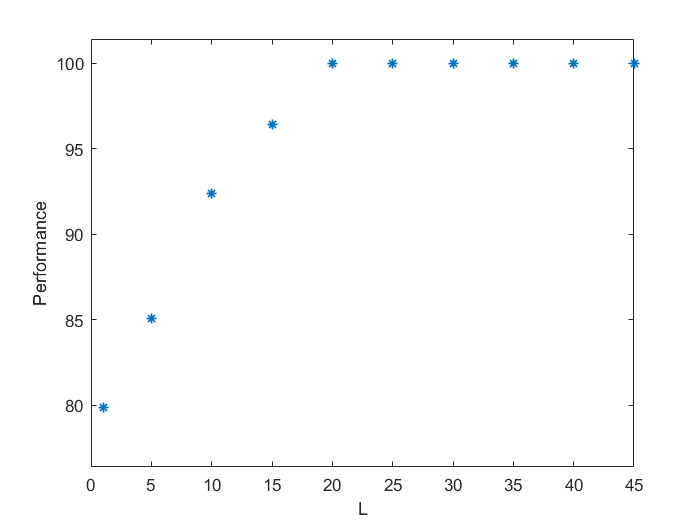}
    \caption{Number of clusters and local models for engine vehicle identification.}
     \label{Number of clusters and local models}
 \end{figure}
 
   \begin{figure}[!ht]
     \centering
     \includegraphics[width=0.99\linewidth]{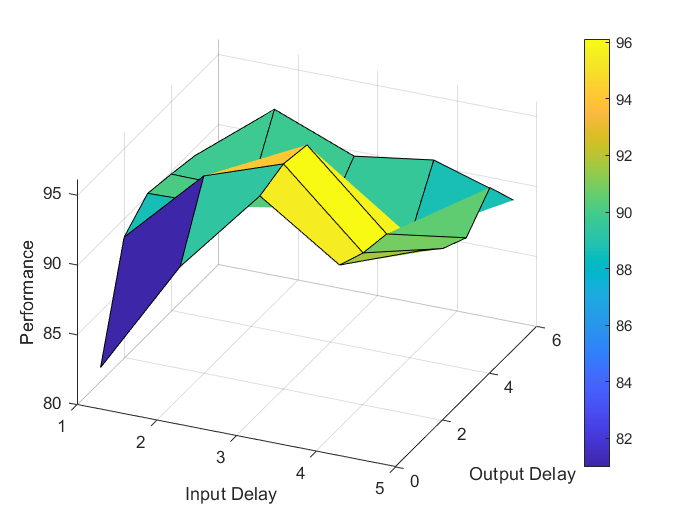}
    \caption{Input delay and output delay for engine vehicle identification.}
     \label{Delay}
 \end{figure} 
 
For the offline part, Table II presents the performance and computational cost of the discrete-time STF-based concurrent learning in comparison with the NNs and the GPR for the engine vehicle identification, and Table III demonstrates the same task for the RCBF-based NMPC policy learning. We evaluate the performance of each function approximator using the best fitting rate (BFR), which analyzes the goodness of fit between the validation data (i.e., measured data) and the simulated output of the trained model (or trained policy) based on the normalized root mean squared error (NRMSE). Thus, the performance represents $(1-\frac{\lVert y - \hat{y} \rVert}{\lVert y - mean(y) \rVert}) \times 100 \hspace{1 mm} \%$ and $(1-\frac{\lVert u_{mpc} - \tilde{u} \rVert}{\lVert u_{mpc} - mean(u_{mpc}) \rVert}) \times 100 \hspace{1 mm} \%$ for Tables II and III, respectively, and 100\% performance means that the simulated output from the trained model (or trained policy) is perfectly matched with the measured data. It is worth noting that the measured data for Tables II and III are the collected system output from the engine vehicle and the collected control input from the RCBF-based NMPC, respectively. As shown in Tables II and III, the STF-based concurrent learning effectively reduces the computational cost of the learning process while

\vspace{4 mm}
\begin{table}[!ht]
\centering
 \caption{Comparison of Performance and Computational Cost for System Identification}
\begin{tabular}{ |p{2.3cm}|p{2.3cm}|p{2.3cm}|  }
\hline\hline
Learning & Performance & Time (per loop) \\
\hline
NNs & $96.15\%$ & $32.8 \hspace{1 mm} ms$ \\\hline
GPR & $96.48\%$ & $99.3 \hspace{1 mm} ms$ \\\hline
STF & $96.12\%$ & $04.3 \hspace{1 mm} ms$ \\\hline
\hline
\end{tabular}
\end{table}
 
\vspace{-6 pt}
\begin{table}[!ht]
\centering
 \caption{Comparison of Performance and Computational Cost for NMPC Policy Learning}
\begin{tabular}{ |p{2.3cm}|p{2.3cm}|p{2.3cm}|  }
\hline
\hline
Fun. Forms & Performance & Time (per loop) \\
\hline
NNs & $98.12\%$ & $19.4 \hspace{1 mm} ms$ \\\hline
GPR & $98.57\%$ & $58.1 \hspace{1 mm} ms$ \\\hline
STF & $98.05\%$ & $03.5 \hspace{1 mm} ms$ \\\hline
\hline
\end{tabular}
\end{table}

\vspace{-6 pt}
\begin{table}[!ht]
\centering
 \caption{Comparison of Performance and Computational Cost for Different Controllers}
\begin{tabular}{ |p{3cm}|p{2cm}|p{2.3cm}|  }
\hline\hline
Control & Performance & Time (per loop) \\
\hline
NMPC (N=1) & $98.76\%$ & $05.7 \hspace{1 mm} ms$ \\\hline
NMPC (N=4) & $99.23\%$ & $25.2 \hspace{1 mm} ms$ \\\hline
NMPC (N=8) & $99.86\%$ & $68.1 \hspace{1 mm} ms$ \\\hline
Online STF (N=1) & $98.66\%$ & $66.2 \hspace{1 mm} \mu s$ \\\hline
Online STF (N=4) & $99.15\%$ & $78.9 \hspace{1 mm} \mu s$ \\\hline
Online STF (N=8) & $99.79\%$ & $90.8 \hspace{1 mm} \mu s$ \\\hline
\hline
\end{tabular}
\end{table}

\hspace{-3.5 mm}it shows high performance for system identification and NMPC learning compared to the NNs and the GPR. Moreover, for the online part, the online data-driven safe predictive control is compared with the RCBF-based NMPC for various future predictions $N$ in Table IV. This table demonstrates the performance of each controller using the NRMSE between the desired reference trajectory and the simulated system output (using the trained model). Thus, the performance represents $(1-\frac{\lVert r - \hat{y} \rVert}{\lVert r - mean(r) \rVert}) \times 100 \hspace{1 mm} \%$, and 100\% performance means that the simulated system output is perfectly matched with the desired reference trajectory. One can see that the online data-driven safe predictive control provides high performance to track the desired reference trajectory while it effectively reduces the computational cost of the NMPC.

\section{Conclusions}
\label{Sec5}
In this paper, we developed a unified online data-driven predictive control framework with robust safety guarantees, which includes a discrete-time STF-based concurrent learning for efficient nonlinear system identification with relaxed  PE conditions, a RCBF-based NMPC policy approximator to explicitly deal with  control learning error, system identification error, and  unknown disturbances, and an online adaptation law based on KKT sensitivity analysis and feedback control. The framework was applied to the cart-inverted pendulum as well as an automotive engine with promising results demonstrated. The main contribution of this paper is mainly on the proposed new framework with control law derivations and analysis. The main purpose of the simulation is to demonstrate the effectiveness of the proposed framework by showing that the developed online data-driven safe predictive control is able to achieve a reasonable performance with negligible online computation as compared to the NMPC. The considered cart-inverted pendulum is a classical system frequently used for nonlinear control benchmarks. Moreover, for the second example of simulation part, we have experimentally collected input/output (I/O) data from a turbocharged internal combustion engine and identified a nominal model for the system using the collected data. The proposed data-driven control is applied on the obtained identified model and demonstrates a reasonable performance as shown in the simulation results. We acknowledge a few limitations of our method as follows. Like the NNs and the GPR, the STF framework requires comprehensive data collection to ensure an adequate coverage of operating conditions. Moreover, in the control design, a bounded external disturbance is assumed, and we will extend this framework with more general unbounded stochastic disturbances in our future work. Another assumption we make is on the recursive feasibility of the optimization problem with control input and safety constraints to guarantee the closed-loop performance. Future work will include addressing the mentioned shortcomings by exploring a finite sample approach to reduce the required collected data for the STF and carrying out a formal discussion on the recursive feasibility of the optimization problem, e.g., with an iteration approach.
Finally, applications to real-world nonlinear and complex systems such as robots and autonomous vehicles will be performed.

\begin{appendices}
\section{(Transferring STF Model to State-Space Model)}
Considering \eqref{NARX Model}-\eqref{local models}, the input vector of the STF function approximator, i.e., ${U}_{stf}(k+1)$, can be written in the format of $[u(k);x(k)]$ as
\begin{equation}
  \begin{aligned}
    \label{local models state}
  &{f}_{i}(k+1) = {A}_{i}{U}_{stf}(k+1)+{b}_{i}+\omega_i(k+1) \\ 
  & \hspace{13.5 mm} = A_{i_2} x(k)+{A}_{i_1}u(k)+{b}_{i}+\omega_i(k+1),\\
  \end{aligned}
\end{equation}
where ${A}_{i_1}$ is the first element of ${A}_{i}$, and $A_{i_2}$ represents the rest of the elements, i.e.,
\begin{equation}
\small
  \begin{aligned}
    \label{parameters state}
    &{A}_{i} = [a_{i_1}, a_{i_2}, a_{i_3}, \ldots, a_{i_{d_u+d_y}}],\\
  &A_{i_1} = a_{i_1},{A}_{i_2} = [a_{i_2}, a_{i_3}, \ldots, a_{i_{d_u+d_y}}],\\
  &x(k) = [u(k-1); \ldots; u(k-d_{u}+1);y(k); \ldots; y(k-d_{y}+1)].
  \end{aligned}
\end{equation}
where $x(k)$ is considered as the states of the system, and $u(k)$ represents the control input.

Now, the nonlinear model \eqref{Composite Structure} is rewritten as
\begin{equation}
\small
  \begin{aligned}
    \label{Composite Structure state}
  y(k+1)=\sum\limits_{i=1}^{L} {\alpha }_{i}(k+1) (A_{i_2} x(k)+{A}_{i_1}u(k)+{b}_{i}+\omega_i(k+1)), 
  \end{aligned}
\end{equation}
where ${\alpha }_{i}(k+1)$ represents ${\alpha }_{i}([u(k);x(k)],{\psi }_{i})$.

Using \eqref{parameters state} and \eqref{Composite Structure state}, the state-space model \eqref{system} is obtained in the following form:
\begin{equation}
  \begin{aligned}
    \label{state-space}
    & x(k+1) = f(x(k),u(k))+w(k),\\
    & \hspace{13 mm} = A_{t_2}(k) x(k)+{A}_{t_1}(k) u(k)+{b}_{t}(k)+w(k),
  \end{aligned}
\end{equation}
where $A_{t_2}$, ${A}_{t_1}$, and ${b}_{t}$ are nonlinear matrices as
\begin{equation}
  \begin{aligned}
    \label{state-space matrices}
    & A_{t_2}(k)=\begin{bmatrix} 
    0_{n_u \times n_u(d_u -1)} & 0_{n_u \times n_y(d_y)}\\ 
    \tau_1 & 0_{n_u(d_u -2) \times n_y(d_y)}\\
    \rho_1(k) & \rho_2(k)\\
    0_{n_y(d_y -1) \times n_u(d_u-1)} & \tau_2
    \end{bmatrix},\\
    & A_{t_1}(k)=\begin{bmatrix} 
    I_{n_u}\\ 
    0_{n_u(d_u -2) \times n_u}\\
    \sum\limits_{i=1}^{L} {\alpha }_{i}([u(k);x(k)],{\psi }_{i}) {A}_{i_1},\\
    0_{n_y(d_y -1) \times n_u}
    \end{bmatrix},\\
    & b_{t}(k)= \begin{bmatrix} 
    0_{n_u \times 1}\\ 
    0_{n_u(d_u -2) \times 1}\\
    \sum\limits_{i=1}^{L} {\alpha }_{i}([u(k);x(k)],{\psi }_{i}) b_i,\\
    0_{n_y(d_y -1) \times 1}
    \end{bmatrix},\\
    & w(k)= \begin{bmatrix} 
    0_{n_u \times 1}\\ 
    0_{n_u(d_u -2) \times 1}\\
    \sum\limits_{i=1}^{L} {\alpha }_{i}([u(k);x(k)],{\psi }_{i}) \omega_i(k+1),\\
    0_{n_y(d_y -1) \times 1}
    \end{bmatrix},\\
  \end{aligned}
\end{equation}
where
\begin{equation}
  \begin{aligned}
    \label{state-space terms}
    & \tau_1 = [I_{n_u(d_u -2)} \hspace{5 mm} 0_{n_u(d_u -2) \times n_u}],\\
    & \tau_2 = [I_{n_y(d_y -1)} \hspace{5 mm} 0_{n_y(d_y -1) \times n_y}],\\
    & \rho_1(k) = \sum\limits_{i=1}^{L} {\alpha }_{i}([u(k);x(k)],{\psi }_{i}) [a_{i_2}, \ldots, a_{i_{d_u}}],\\
    & \rho_2(k) = \sum\limits_{i=1}^{L} {\alpha }_{i}([u(k);x(k)],{\psi }_{i}) [a_{i_{d_u+1}}, \ldots, a_{i_{d_u+d_y}}].
  \end{aligned}
\end{equation}
\end{appendices}

\bibliographystyle{ieeetr}
\bibliography{References.bib}

\begin{thebibliography}{10}

\bibitem{zidek2021model}
R.~A. Zidek, I.~V. Kolmanovsky, and A.~Bemporad, ``Model predictive control for
  drift counteraction of stochastic constrained linear systems,'' {\em
  Automatica}, vol.~123, p.~109304, 2021.

\bibitem{alonso2021robust}
C.~A. Alonso, J.~S. Li, N.~Matni, and J.~Anderson, ``Robust distributed and
  localized model predictive control,'' {\em arXiv preprint arXiv:2103.14171},
  2021.

\bibitem{lopez2019dynamic}
B.~T. Lopez, J.-J.~E. Slotine, and J.~P. How, ``Dynamic tube mpc for nonlinear
  systems,'' in {\em 2019 American Control Conference (ACC)}, pp.~1655--1662,
  IEEE, 2019.

\bibitem{wytock2017dynamic}
M.~Wytock, N.~Moehle, and S.~Boyd, ``Dynamic energy management with
  scenario-based robust mpc,'' in {\em 2017 American Control Conference (ACC)},
  pp.~2042--2047, IEEE, 2017.

\bibitem{vahidi2023extended}
A.~Vahidi-Moghaddam, K.~Zhang, Z.~Li, X.~Yin, Z.~Song, and Y.~Wang, ``Extended
  neighboring extremal optimal control with state and preview perturbations,''
  {\em arXiv preprint arXiv:2306.04830}, 2023.

\bibitem{vahidi2022event}
A.~Vahidi-Moghaddam, Z.~Li, N.~Li, K.~Zhang, and Y.~Wang, ``Event-triggered
  cloud-based nonlinear model predictive control with neighboring extremal
  adaptations,'' in {\em 2022 IEEE 61st Conference on Decision and Control
  (CDC)}, pp.~3724--3731, IEEE, 2022.

\bibitem{yazdandoost2022optimization}
F.~Yazdandoost, H.~Razavi, and A.~Izadi, ``Optimization of agricultural
  patterns based on virtual water considerations through integrated water
  resources management modeling,'' {\em International Journal of River Basin
  Management}, vol.~20, no.~2, pp.~255--263, 2022.

\bibitem{hajidavalloo2021mpc}
M.~R. Hajidavalloo, A.~Gupta, Z.~Li, and W.-C. Tai, ``{MPC}-based vibration
  control and energy harvesting using stochastic linearization for a new energy
  harvesting shock absorber,'' in {\em 2021 IEEE Conference on Control
  Technology and Applications (CCTA)}, pp.~38--43, IEEE, 2021.

\bibitem{chen2020online}
K.~Chen, Z.~Li, D.~Filev, Y.~Wang, K.~Wu, and J.~Wang, ``Online nonlinear
  dynamic system identification with evolving spatial-temporal filters: Case
  study on turbocharged engine modeling,'' {\em IEEE Transactions on Control
  Systems Technology}, vol.~29, no.~3, pp.~1364--1371, 2020.

\bibitem{akbari2023blending}
S.~Akbari, P.~H. Dabaghian, and O.~San, ``Blending machine learning and
  sequential data assimilation over latent spaces for surrogate modeling of
  boussinesq systems,'' {\em Physica D: Nonlinear Phenomena}, vol.~448,
  p.~133711, 2023.

\bibitem{ahmadi2022deep}
A.~Ahmadi, M.~Talaei, M.~Sadipour, A.~M. Amani, and M.~Jalili, ``Deep federated
  learning-based privacy-preserving wind power forecasting,'' {\em IEEE
  Access}, vol.~11, pp.~39521--39530, 2022.

\bibitem{akbari2022numerical}
S.~Akbari, S.~Pawar, and O.~San, ``Numerical assessments of a nonintrusive
  surrogate model based on recurrent neural networks and proper orthogonal
  decomposition: Rayleigh benard convection,'' {\em arXiv preprint
  arXiv:2212.05384}, 2022.

\bibitem{jamali2021objective}
R.~Jamali, G.~Battista, M.~Martarelli, P.~Chiariotti, D.~S. Kunte,
  C.~Colangeli, and P.~Castellini, ``Objective-subjective sound quality
  correlation performance comparison of genetic algorithm based regression
  models and neural network based approach,'' in {\em Journal of Physics:
  Conference Series}, vol.~2041, p.~012015, IOP Publishing, 2021.

\bibitem{jebellat2021training}
I.~Jebellat, H.~N. Pishkenari, and E.~Jebellat, ``Training microrobots via
  reinforcement learning and a novel coding method,'' in {\em 2021 9th RSI
  International Conference on Robotics and Mechatronics (ICRoM)}, pp.~105--111,
  IEEE, 2021.

\bibitem{yousefpour2023unsupervised}
A.~Yousefpour, M.~Shishehbor, Z.~Z. Foumani, and R.~Bostanabad, ``Unsupervised
  anomaly detection via nonlinear manifold learning,'' {\em arXiv preprint
  arXiv:2306.09441}, 2023.

\bibitem{foumani2023multi}
Z.~Z. Foumani, M.~Shishehbor, A.~Yousefpour, and R.~Bostanabad,
  ``Multi-fidelity cost-aware bayesian optimization,'' {\em Computer Methods in
  Applied Mechanics and Engineering}, vol.~407, p.~115937, 2023.

\bibitem{chen2022stochastic}
K.~Chen, K.~Zhang, Z.~Li, Y.~Wang, K.~Wu, and U.~V. Kalabi{\'c}, ``Stochastic
  model predictive control for quasi-linear parameter varying systems: Case
  study on automotive engine control,'' {\em Journal of Dynamic Systems,
  Measurement, and Control}, vol.~144, no.~6, p.~061005, 2022.

\bibitem{zhang2022dimension}
K.~Zhang, Y.~Zheng, and Z.~Li, ``Dimension reduction for efficient data-enabled
  predictive control,'' {\em arXiv preprint arXiv:2211.03697}, 2022.

\bibitem{amiri2022fly}
A.~Amiri-Margavi and H.~Babaee, ``On-the-fly reduced-order modeling of
  transient flow response subject to high-dimensional external forcing,'' {\em
  Bulletin of the American Physical Society}, 2022.

\bibitem{zhong2022optimally}
Y.~Zhong, A.~Amiri-Margavi, H.~Babaee, and K.~Taira, ``Optimally time-dependent
  mode analysis of vortex gust-airfoil wake interactions,'' {\em Bulletin of
  the American Physical Society}, 2022.

\bibitem{lore2021model}
J.~Lore, S.~De~Pascuale, P.~Laiu, B.~Phathanapirom, S.~Brunton, J.~Canik,
  S.~Cetiner, N.~Kutz, and P.~Stangeby, ``Model predictive control of boundary
  plasmas using reduced models derived from solps-iter,'' in {\em APS Division
  of Plasma Physics Meeting Abstracts}, vol.~2021, pp.~NM09--002, 2021.

\bibitem{bao2022learning}
Y.~Bao, K.~J. Chan, A.~Mesbah, and J.~M. Velni, ``Learning-based
  adaptive-scenario-tree model predictive control with probabilistic safety
  guarantees using bayesian neural networks,'' in {\em 2022 American Control
  Conference (ACC)}, pp.~3260--3265, IEEE, 2022.

\bibitem{krishnamoorthy2021adaptive}
D.~Krishnamoorthy, A.~Mesbah, and J.~A. Paulson, ``An adaptive correction
  scheme for offset-free asymptotic performance in deep learning-based economic
  mpc,'' {\em IFAC-PapersOnLine}, vol.~54, no.~3, pp.~584--589, 2021.

\bibitem{arcari2023bayesian}
E.~Arcari, M.~V. Minniti, A.~Scampicchio, A.~Carron, F.~Farshidian, M.~Hutter,
  and M.~N. Zeilinger, ``Bayesian multi-task learning mpc for robotic mobile
  manipulation,'' {\em IEEE Robotics and Automation Letters}, 2023.

\bibitem{arcari2020meta}
E.~Arcari, A.~Carron, and M.~N. Zeilinger, ``Meta learning mpc using
  finite-dimensional gaussian process approximations,'' {\em arXiv preprint
  arXiv:2008.05984}, 2020.

\bibitem{dean2020guaranteeing}
S.~Dean, A.~J. Taylor, R.~K. Cosner, B.~Recht, and A.~D. Ames, ``Guaranteeing
  safety of learned perception modules via measurement-robust control barrier
  functions,'' {\em arXiv preprint arXiv:2010.16001}, 2020.

\bibitem{chen2020safety}
Y.~Chen, J.~Anderson, K.~Kalsi, A.~D. Ames, and S.~H. Low, ``Safety-critical
  control synthesis for network systems with control barrier functions and
  assume-guarantee contracts,'' {\em IEEE Transactions on Control of Network
  Systems}, vol.~8, no.~1, pp.~487--499, 2020.

\bibitem{ames2019control}
A.~D. Ames, S.~Coogan, M.~Egerstedt, G.~Notomista, K.~Sreenath, and P.~Tabuada,
  ``Control barrier functions: Theory and applications,'' in {\em 2019 18th
  European control conference (ECC)}, pp.~3420--3431, IEEE, 2019.

\bibitem{isaly2020zeroing}
A.~Isaly, B.~C. Allen, R.~G. Sanfelice, and W.~E. Dixon, ``Zeroing control
  barrier functions for safe volitional pedaling in a motorized cycle,'' {\em
  IFAC-PapersOnLine}, vol.~53, no.~5, pp.~218--223, 2020.

\bibitem{shivam2020intersection}
S.~Shivam, Y.~Wardi, M.~Egerstedt, A.~Kanellopoulos, and K.~G. Vamvoudakis,
  ``Intersection-traffic control of autonomous vehicles using newton-raphson
  flows and barrier functions,'' {\em IFAC-PapersOnLine}, vol.~53, no.~2,
  pp.~15733--15738, 2020.

\bibitem{garg2021robust}
K.~Garg and D.~Panagou, ``Robust control barrier and control lyapunov functions
  with fixed-time convergence guarantees,'' in {\em 2021 American Control
  Conference (ACC)}, pp.~2292--2297, IEEE, 2021.

\bibitem{black2021fixed}
M.~Black, E.~Arabi, and D.~Panagou, ``A fixed-time stable adaptation law for
  safety-critical control under parametric uncertainty,'' in {\em 2021 European
  Control Conference (ECC)}, pp.~1328--1333, IEEE, 2021.

\bibitem{isaly2021adaptive}
A.~Isaly, O.~S. Patil, R.~G. Sanfelice, and W.~E. Dixon, ``Adaptive safety with
  multiple barrier functions using integral concurrent learning,'' in {\em 2021
  American Control Conference (ACC)}, pp.~3719--3724, IEEE, 2021.

\bibitem{lopez2020robust}
B.~T. Lopez, J.-J.~E. Slotine, and J.~P. How, ``Robust adaptive control barrier
  functions: An adaptive \& data-driven approach to safety (extended
  version),'' {\em arXiv preprint arXiv:2003.10028}, 2020.

\bibitem{taylor2020adaptive}
A.~J. Taylor and A.~D. Ames, ``Adaptive safety with control barrier
  functions,'' in {\em 2020 American Control Conference (ACC)}, pp.~1399--1405,
  IEEE, 2020.

\bibitem{vahidi2021learning}
A.~Vahidi-Moghaddam, M.~Mazouchi, and H.~Modares, ``Learning dynamics system
  models with prescribed-performance guarantees using experience-replay,'' in
  {\em 2021 American Control Conference (ACC)}, pp.~1941--1946, IEEE, 2021.

\bibitem{vahidi2020memory}
A.~Vahidi-Moghaddam, M.~Mazouchi, and H.~Modares, ``Memory-augmented system
  identification with finite-time convergence,'' {\em IEEE Control Systems
  Letters}, vol.~5, no.~2, pp.~571--576, 2020.

\bibitem{chowdhary2010concurrent}
G.~Chowdhary and E.~Johnson, ``Concurrent learning for convergence in adaptive
  control without persistency of excitation,'' in {\em 49th IEEE Conference on
  Decision and Control (CDC)}, pp.~3674--3679, IEEE, 2010.

\bibitem{bai1985persistency}
E.~W. Bai and S.~S. Sastry, ``Persistency of excitation, sufficient richness
  and parameter convergence in discrete time adaptive control,'' {\em Systems
  \& control letters}, vol.~6, no.~3, pp.~153--163, 1985.

\bibitem{boroujeni2023hybrid}
S.~P.~H. Boroujeni and E.~Pashaei, ``A hybrid chimp optimization algorithm and
  generalized normal distribution algorithm with opposition-based learning
  strategy for solving data clustering problems,'' {\em arXiv preprint
  arXiv:2302.08623}, 2023.

\bibitem{boroujeni2021data}
S.~P.~H. Boroujeni and E.~Pashaei, ``Data clustering using chimp optimization
  algorithm,'' in {\em 2021 11th international conference on computer
  engineering and knowledge (ICCKE)}, pp.~296--301, IEEE, 2021.

\bibitem{derakhshan2021detecting}
A.~Derakhshan, I.~G. Harris, and M.~Behzadi, ``Detecting telephone-based social
  engineering attacks using scam signatures,'' in {\em Proceedings of the 2021
  ACM Workshop on Security and Privacy Analytics}, pp.~67--73, 2021.

\bibitem{chowdhary2011singular}
G.~Chowdhary and E.~Johnson, ``A singular value maximizing data recording
  algorithm for concurrent learning,'' in {\em Proceedings of the 2011 American
  Control Conference}, pp.~3547--3552, IEEE, 2011.

\bibitem{vahidi2022data}
A.~Vahidi-Moghaddam, K.~Chen, Z.~Li, Y.~Wang, and K.~Wu, ``Data-driven safe
  predictive control using spatial temporal filter-based function
  approximators,'' {\em 2022 American Control Conference (ACC)}, 2022.

\bibitem{borrelli2017predictive}
F.~Borrelli, A.~Bemporad, and M.~Morari, {\em Predictive control for linear and
  hybrid systems}.
\newblock Cambridge University Press, 2017.

\bibitem{lars2011nonlinear}
G.~Lars and P.~J{\"u}rgen, ``Nonlinear model predictive control theory and
  algorithms,'' 2011.

\end{thebibliography}
 
\end{document}